\author{
Daniel Marsden
\institute{
Department of Computer Science\\
University of Oxford\\
}
\email{daniel.marsden@cs.ox.ac.uk}
}
\title{Fibred Coalgebraic Logic and Quantum Protocols}
\newcommand{\minorproof}[1]{}
\newcommand{\Alice}{\ensuremath{\operatorname{Alice}}\xspace}
\newcommand{\Bob}{\ensuremath{\operatorname{Bob}}\xspace}
\newcommand{\Both}{\ensuremath{\operatorname{Both}}\xspace}
\newcommand{\Bell}{\ensuremath{\operatorname{Bell}}\xspace}
\newcommand{\Channel}{\ensuremath{\operatorname{Channel}}\xspace}
\newcommand{\qdfuncn}[1]{\ensuremath{Q^d_{#1}}}
\newcommand{\distfunc}{\ensuremath{D}\xspace}
\newcommand{\san}[1]{\ensuremath{\mathcal{A}_{#1}}}
\newcommand{\predlift}[1]{\ensuremath{\llbracket #1 \rrbracket}}
\newcommand{\thenlift}[2]{\ensuremath{#2 \lhd #1}}
\newcommand{\deqmod}[1]{\ensuremath{\operatorname{Eq}_{#1}}}
\newcommand{\detcertmod}[1]{\ensuremath{\operatorname{Next}_{#1}}}
\newcommand{\dreqmod}[2]{\ensuremath{\operatorname{Eq}_{#1,#2}}}
\newcommand{\deqlift}[1]{\predlift{\deqmod{#1}}}
\newcommand{\detcertlift}[1]{\predlift{\detcertmod{#1}}}
\newcommand{\dreqlift}[2]{\predlift{\dreqmod{#1}{#2}}}
\newcommand{\qdeqmod}[3]{\ensuremath{\operatorname{Eq}_{#1,#2,#3}}}
\newcommand{\qdcertmod}[2]{\ensuremath{\operatorname{C}_{#1,#2}}}
\newcommand{\qdeqlift}[3]{\predlift{\qdeqmod{#1}{#2}{#3}}}
\newcommand{\eval}[1]{\ensuremath{ev^{#1}}}
\newcommand{\bellstate}[1]{\ensuremath{\psi_{#1}}}
\newcommand{\bellproj}[1]{\ensuremath{\hat{P}_{\bellstate{#1}}}\xspace}
\newcommand{\bellsadj}{\ensuremath{\hat{A}_{\Bell}}\xspace}
\newcommand{\lang}[2]{\ensuremath{\mathcal{L}^{#1}_{#2}}}
\newcommand{\typedlang}[3]{\ensuremath{\lang{#1}{#2}(#3)}}
\newcommand{\fibsigcat}[1]{\ensuremath{\mathcal{C}^{#1}}}
\newcommand{\fibsigsig}[2]{\ensuremath{\Lambda^{#1}_{#2}}}
\newcommand{\modtype}[2]{\ensuremath{#1^{#2} \rightarrow #1}}
\newcommand{\unarymodtype}[1]{\ensuremath{#1 \rightarrow #1}}
\newcommand{\bits}[1]{\ensuremath{\operatorname{[#1]}}}
\newcommand{\adapt}[1]{\ensuremath{\operatorname{#1}}}
\begin{document}
\maketitle

\begin{abstract}
Motivated by applications in modelling quantum systems using coalgebraic
techniques, we introduce a fibred coalgebraic logic.
Our approach extends the conventional predicate lifting semantics
with additional modalities relating conditions on different fibres. 
As this fibred setting will typically involve multiple signature functors, the
logic incorporates a calculus of modalities enabling the construction of
new modalities using various composition operations.
We extend the semantics of coalgebraic logic to this setting, and prove that
this extension respects behavioural equivalence.

We show how properties of the semantics of modalities are preserved under composition operations,
and then apply the calculational aspect of our logic to produce an expressive set of modalities for reasoning about
quantum systems, building these modalities up from simpler components.
We then demonstrate how these modalities can describe some standard quantum protocols.
The novel features of our logic are shown to allow for a uniform description of unitary evolution,
and support local reasoning such as ``Alice's qubit satisfies condition $\varphi$'' as is
common when discussing quantum protocols.
\end{abstract}

\section{Introduction}
In \citep{Abramsky2010} a coalgebraic model of quantum systems was constructed using
a novel fibrational structure to introduce ``enough contravariance'' to represent the
important physical symmetries of a quantum system. The paper then raised the question of what
a suitable ``fibred coalgebraic logic'' would look like, and that is the question we address
in this paper.

In the first half of the paper
we propose an extension of coalgebraic logic based upon predicate liftings \citep{Pattinson2003, Schroder2008}
(see also the excellent introduction \citep{Pattinson2008})
which provides a convenient setting in which to produce practical modal logics in a lightweight manner.
New types of modalities are introduced that allow explicit reasoning between different fibres,
and composition operations are provided to build modalities from simpler components.
In the second half of the paper we exploit the calculational aspects of our logic to construct
modalities suitable for reasoning about quantum protocols. The new features of our logic provide mechanisms
for describing important features such as unitary evolution, restriction to subsystems and local measurements.
Finally, we illustrate these features by applying them to two standard quantum protocols.

Fibred constructions involving coalgebras are also considered in \citep{KurzPattinson2000}  and \citep{KurzPattinson2000b},
in order to capture parameterization of signature functors. The question of fibred coalgebraic
logic using predicate liftings is explored in the later paper, but primarily from the perspective of the 
relationship to the logical structure of institutions \citep{GoguenBurstall1992} and this question is 
further pursued in \citep{Pattinson2002}. In contrast to the work in this paper, 
the logic discussed in these papers is exactly a conventional coalgebraic logic in each fibre, 
and the relationship between the fibres does not appear directly in the syntax of the logic.
In \citep{Marsden2013} a pseudo coalgebraic setting was introduced for modelling quantum systems, 
in order to develop the representation result of \citep{Abramsky2010} in a simpler and
more easily motivated setting. A coalgebraic logic was discussed in this setting, supporting a single
signature functor and modalities induced by its natural isomorphisms.

\section{Fibred Coalgebraic Logic}
\label{sec:fiblog}
Each fibre of our modal logic will correspond to a different signature functor. 
A fibred signature will describe a basic set of modalities that are available on each fibre. 

\begin{definition}[Modal Signature]
A \define{modal signature} $\Lambda$ is a set of modality symbols, each with
an associated cardinal referred to as the arity of the modality.
\end{definition}

\begin{definition}[Fibred Modal Signature]
A \define{fibred modal signature} $\Phi$ is a small monoidal category $\fibsigcat{\Phi}$ and for each
object $A$ in $\fibsigcat{\Phi}$ an associated modal signature $\fibsigsig{\Phi}{A}$. For
each pair of objects $A, B$ with $A \neq B$ we require that $\fibsigsig{\Phi}{A} \cap \fibsigsig{\Phi}{B} = \emptyset$.
\end{definition}

Given the basic set of modalities provided by the fibred modal signature, additional modalities
can be constructed via various composition operations.

\begin{definition}[Modality Expressions]
Let $\Phi$ be a fibred modal signature. We inductively define a typed
language of \define{modality expressions}, with conjunctions bounded
by a maximum cardinality $\kappa$.

We have one introduction rule:
\begin{center}
\begin{prooftree}
\AxiomC{$\Box_{\lambda} \in \fibsigsig{\Phi}{A}$ with arity $\alpha$}
\UnaryInfC{$\Box_{\lambda} : A^\alpha \rightarrow A$}
\end{prooftree}
\end{center}
We can apply logical operations to modality expressions:
\begin{center}
\begin{prooftree}
\AxiomC{$\bigcirc : \modtype{A}{\alpha}$}
\UnaryInfC{$\neg \bigcirc : \modtype{A}{\alpha}$}
\end{prooftree}
\end{center}
\begin{center}
\begin{prooftree}
\AxiomC{$0 < \card{I} < \kappa$ and $\bigcirc_i : \modtype{A}{\alpha}$ for each $i \in I$}
\UnaryInfC{$\bigwedge_{i \in I} \bigcirc_i : \modtype{A}{\alpha}$}
\end{prooftree}
\end{center}
We have 2 rules for constructing new modality expressions by composition:
\begin{center}
\begin{tabular}{p{5cm} p{5cm}}
\begin{prooftree}
\AxiomC{$\bigcirc_1 : \modtype{A}{\alpha}$}
\AxiomC{$\bigcirc_2 : B \rightarrow B$}
\BinaryInfC{$\thenlift{\bigcirc_1}{\bigcirc_2} : \modtype{(B \otimes A)}{\alpha}$}
\end{prooftree}
&
\begin{prooftree}
\AxiomC{$\bigcirc : \modtype{B}{\alpha}$}
\AxiomC{$f \in \fibsigcat{\Phi}(A, B)$}
\BinaryInfC{$\bigcirc^f : \modtype{A}{\alpha}$}
\end{prooftree}
\end{tabular}
\end{center}
\end{definition}

The formulae applicable on each fibre are described by mutual induction, allowing
the application of appropriate modality expressions as modalities:

\begin{definition}[Syntax and Typing]
For a fibred modal signature $\Phi$
we now define a language of typed formulae.
We write $\varphi : A$ for formula $\varphi$ is
of type $A$, in which case we will refer to $\varphi$ as an \define{$A$-formula}.

Our language is defined inductively by the following rules, starting with the typing rules for 
standard logical connectives for $A$ an object in  $\fibsigcat{\Phi}$:
\begin{center}
\begin{tabular}{p{1.5cm} p{1.5cm} p{5.5cm}}
\begin{prooftree}
\AxiomC{}
\UnaryInfC{$\top^A : A$}
\end{prooftree}
&
\begin{prooftree}
\AxiomC{$\varphi : A$}
\UnaryInfC{$\neg \varphi : A$}
\end{prooftree}
&
\begin{prooftree}
\AxiomC{$\varphi_i : A \mbox{ for each } i \in I \mbox { and } 0 < \card{I} < \kappa$}
\UnaryInfC{$\bigwedge (\varphi_i)_{i \in I} : A$}
\end{prooftree}
\end{tabular}
\end{center}
We have two application rules for the different types of modalities:
\begin{center}
\begin{tabular}{p{3.7cm} p{4.5cm}}
\begin{prooftree}
\AxiomC{$\varphi : B$}
\AxiomC{$f \in \fibsigcat{\Phi}(A, B)$}
\BinaryInfC{$f \varphi : A$}
\end{prooftree}
&
\begin{prooftree}
\AxiomC{$\varphi_i : A \mbox{ for each } i \in \alpha$}
\AxiomC{$\bigcirc : \modtype{A}{\alpha}$}
\BinaryInfC{$\bigcirc (\varphi_i)_{i \in \alpha} : A$}
\end{prooftree}
\end{tabular}
\end{center}
Modalities of the form $f$ for $f$ a \fibsigcat{\Phi} morphism will be referred to as \define{adaptation modalities}.
These modalities permit lifting of subformulae from different fibres in a suitable manner.

We will write $\lang{\Phi}{\kappa}$ for the formulae with conjunctions of cardinality at most $\kappa$ and
$\typedlang{\Phi}{\kappa}{A}$ for the $A$-formulae with conjunctions of cardinality at most $\kappa$.
\end{definition}

\begin{remark}
The category $[\refcset, \refcset]$ of endofunctors on \refcset and natural transformations
between them can be given the structure of a strict monoidal category, with the tensor given
by functor composition.
\end{remark}

\begin{definition}
We will write $\contrapowerset : \refcset^{op} \rightarrow \refcset$ for the contravariant powerset functor.
Define natural transformation $\neg : 2 \Rightarrow 2$ on components as:
\begin{equation}
\neg_X(U) := X \setminus U
\end{equation}
For each set $I$ define natural transformation $\bigwedge : 2^I \Rightarrow 2$ on components as:
\begin{equation}
\bigwedge_X((X_i)_{i \in I}) := \bigcap_{i \in I} X_i
\end{equation}
\end{definition}

The semantics for our logic are described by providing a structure identifying types with signature
functors, and the morphisms between types as suitable natural transformations. The tensor product
then corresponds to the composition of signature functors.

\begin{definition}[Structure]
For a given fibred modal signature $\Phi$, a \define{$\Phi$-structure} $S$ is a strict monoidal functor
$\llbracket - \rrbracket^S : \fibsigcat{\Phi} \rightarrow [\refcset, \refcset]$, and
for each object $A$ in $\fibsigcat{\Phi}$ and modality $\Box_\lambda$ in $\fibsigsig{\Phi}{A}$ of
arity $\alpha$ an associated natural transformation $\llbracket \Box_\lambda \rrbracket^S : 2^\alpha \Rightarrow 2 \circ \llbracket A \rrbracket^S$, referred to as a \define{predicate lifting} of arity $\alpha$.
\end{definition}

\begin{remark}
For a given fibred monoidal signature $\Phi$, the category $\fibsigcat{\Phi}$ will often be
a monoidal subcategory of $[\refcset, \refcset]$, with the functor 
$\llbracket - \rrbracket : \fibsigcat{\Phi} \rightarrow [\refcset, \refcset]$ given by the inclusion.
In later sections we will often identify the two when this is assumed to be the case.
\end{remark}

\begin{definition}[Modality Expression Semantics]
\label{def:modsem}
The semantics of modality expressions are given by suitable predicate liftings.
Let $\Phi$ be a fibred modal signature and $S$ a $\Phi$-structure.
Assume that $\alpha$ is a cardinal, $A,B$ are objects of \fibsigcat{\Phi}, $\Box_\lambda \in \fibsigsig{\Phi}{A}$,
$f : B \rightarrow A$ is a \fibsigcat{\Phi} morphism,
$\bigcirc : \modtype{A}{\alpha}$, for each $i \in I$ $\bigcirc_i : \modtype{A}{\alpha}$ and $\bigcirc' : \unarymodtype{B}$.
The semantics for modality expressions are given inductively as follows:
\begin{align}
\llbracket \Box_\lambda \rrbracket &:= \llbracket \Box_\lambda \rrbracket^S \\
\llbracket \neg \bigcirc \rrbracket &:= (\neg * \llbracket A \rrbracket^S) \circ \llbracket \bigcirc \rrbracket\\
\llbracket \bigwedge_{i \in I} \bigcirc_i \rrbracket &:= (\bigwedge * \llbracket A \rrbracket^S) \circ \langle \llbracket \bigcirc_i \rrbracket \mid i \in I \rangle \\
\llbracket \bigcirc^f \rrbracket &:= (2 * \llbracket f \rrbracket^S) \circ \llbracket \bigcirc \rrbracket\\
\llbracket \thenlift{\bigcirc}{\bigcirc'} \rrbracket &:= (\llbracket \bigcirc' \rrbracket * \llbracket A \rrbracket^S) \circ \llbracket \bigcirc \rrbracket
\end{align}
Above $\circ$ and $*$ denote vertical and horizontal composition of natural transformations respectively.
\end{definition}

\begin{definition}[Semantics of $A$-formulae]
Let $\Phi$ be a fibred modal signature and $S$ a $\Phi$-structure.
Assume $\alpha$ is a cardinal, $A$ is an object of \fibsigcat{\Phi}, $\bigcirc : \modtype{A}{\alpha}$ is a modality expression,
and $f : A \rightarrow B$ a \fibsigcat{\Phi} morphism.
The semantics for a formula $\varphi : A$, is given inductively for $\llbracket A \rrbracket$-coalgebra $(X, \gamma)$ as follows:
\begin{align}
\llbracket \top^A \rrbracket_{X, \gamma} &:= X\\
\llbracket \neg \varphi \rrbracket_{X, \gamma} &:= X \setminus \llbracket \varphi \rrbracket_{X, \gamma}\\
\llbracket \bigwedge (\varphi_i)_{i \in I} \rrbracket_{X, \gamma} &:= \bigcap_{i \in I} \llbracket \varphi_i \rrbracket_{X, \gamma}\\
\llbracket \bigcirc (\varphi_i)_{i \in \alpha} \rrbracket_{X, \gamma} &:= \gamma^{-1} \circ \llbracket \bigcirc \rrbracket_X ((\llbracket \varphi_i \rrbracket_{X, \gamma})_{i \in \alpha})\\
\llbracket f \varphi \rrbracket_{X, \gamma} &:= \llbracket \varphi \rrbracket_{X, \llbracket f \rrbracket^S_X \circ \gamma}
\end{align}
\end{definition}

\begin{remark}
The obvious relationships hold between logical operations on modality expressions and logical operations
on formulae. Also the logical operations commute appropriately with adaption modalities. We will
not need these properties for our examples, so the details are omitted.
\end{remark}

We now define a translation that will produce an equivalent formula with adaptation modalities removed. This
will allow use to reduce questions in the extended syntax to questions in the well understood setting of
coalgebraic logic with predicate liftings.

\begin{definition}[Translation]
For a given fibred modal signature $\Phi$,
for $f: A \rightarrow B$ in $\fibsigcat{\Phi}$, define the syntax translation $\tau_{f}$ as follows:
\begin{align}
\tau_{f}(\top^B : B) &:= \top^A : A\\
\tau_{f}(\neg \varphi : B) &:= \neg \tau_{f}(\varphi) : A\\
\tau_{f}(\bigwedge (\varphi_i)_{i \in I} : B) &:= \bigwedge (\tau_{f}(\varphi_i))_{i \in I} : A\\
\tau_{f}(\bigcirc (\varphi_i)_{i \in \alpha} : B) &:= \bigcirc^f (\tau_{f}(\varphi_i))_{i \in \alpha} : A\\
\tau_{f}(f' \varphi : B) &:= \tau_{f' \circ f}(\varphi) : A
\end{align}
\end{definition}

\begin{proposition}
\label{prop:semtrans}
For a given fibred modal signature $\Phi$ and $\Phi$-structure, for $f: A \rightarrow B$ in $\fibsigcat{\Phi}$:
\begin{equation}
\llbracket \varphi \rrbracket_{X, \llbracket f \rrbracket_X \circ \gamma} = \llbracket \tau_{f}(\varphi) \rrbracket_{X, \gamma}
\end{equation}
\end{proposition}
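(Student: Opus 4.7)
The plan is to prove the statement by structural induction on the formula $\varphi : B$, strengthening the claim so that it quantifies uniformly over all morphisms $f : A \rightarrow B$ in $\fibsigcat{\Phi}$ (and over all coalgebras $(X,\gamma)$ of $\llbracket A \rrbracket^S$). The strengthening is essential because the adaptation-modality case of the translation rewrites $\tau_f(f' \varphi)$ to $\tau_{f' \circ f}(\varphi)$, so the inductive hypothesis must be available for a different morphism than the one in the goal.

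The base case $\varphi = \top^B$ is immediate. The cases $\varphi = \neg \psi$ and $\varphi = \bigwedge_{i \in I} \psi_i$ reduce directly to the inductive hypotheses applied to the subformulae, since both the formula semantics and the translation commute pointwise with complementation and intersection. The interesting cases are the two modality rules. For the adaptation modality $\varphi = f' \psi$ with $f' : B \to C$, I would compute
\begin{align*}
\llbracket f' \psi \rrbracket_{X,\, \llbracket f \rrbracket^S_X \circ \gamma}
&= \llbracket \psi \rrbracket_{X,\, \llbracket f' \rrbracket^S_X \circ \llbracket f \rrbracket^S_X \circ \gamma}
= \llbracket \psi \rrbracket_{X,\, \llbracket f' \circ f \rrbracket^S_X \circ \gamma},
\end{align*}
using functoriality of $\llbracket - \rrbracket^S$, and then apply the inductive hypothesis at the morphism $f' \circ f : A \to C$ to identify this with $\llbracket \tau_{f' \circ f}(\psi) \rrbracket_{X,\gamma} = \llbracket \tau_f(f'\psi) \rrbracket_{X,\gamma}$.

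The main obstacle, and the step I would treat most carefully, is the predicate-lifting case $\varphi = \bigcirc(\psi_i)_{i \in \alpha}$ with $\bigcirc : \modtype{B}{\alpha}$, where the translation produces $\bigcirc^f (\tau_f(\psi_i))_{i \in \alpha}$. Here one must unfold the modality-expression semantics $\llbracket \bigcirc^f \rrbracket = (2 * \llbracket f \rrbracket^S) \circ \llbracket \bigcirc \rrbracket$, whose $X$-component is precisely the preimage map $\llbracket f \rrbracket^{S,-1}_X$ postcomposed with $\llbracket \bigcirc \rrbracket_X$. Combined with the inductive hypothesis $\llbracket \tau_f(\psi_i) \rrbracket_{X,\gamma} = \llbracket \psi_i \rrbracket_{X,\, \llbracket f \rrbracket^S_X \circ \gamma}$, the right-hand side becomes
\begin{align*}
\gamma^{-1} \circ \llbracket f \rrbracket^{S,-1}_X \circ \llbracket \bigcirc \rrbracket_X \bigl((\llbracket \psi_i \rrbracket_{X,\, \llbracket f \rrbracket^S_X \circ \gamma})_{i \in \alpha}\bigr),
\end{align*}
which coincides with the left-hand side $(\llbracket f \rrbracket^S_X \circ \gamma)^{-1} \circ \llbracket \bigcirc \rrbracket_X(\dots)$ by contravariance of preimage. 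The only subtlety is correctly identifying the whiskering $(2 * \llbracket f \rrbracket^S)_X$ with the preimage operation; once that piece of notational bookkeeping is in hand, the remaining cases go through mechanically.
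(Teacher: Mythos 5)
Your proposal is correct and follows essentially the same route as the paper's own proof: structural induction on $\varphi$, with the adaptation case handled via functoriality and the induction hypothesis at $f' \circ f$, and the modality case by unfolding $\llbracket \bigcirc^f \rrbracket$ and using $(\llbracket f \rrbracket^S_X \circ \gamma)^{-1} = \gamma^{-1} \circ (\llbracket f \rrbracket^S_X)^{-1}$. Your explicit remark that the induction hypothesis must be quantified over all morphisms $f$ is a point the paper leaves implicit, and is worth stating.
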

\minorproof{
\begin{proof}
The case for $\top$ is trivial, for negation:
\begin{align}
\llbracket \tau_f( \neg \varphi) \rrbracket_{X, \gamma} &= \llbracket \neg \tau_f(\varphi) \rrbracket_{X, \gamma}\\
 &= X \setminus \llbracket \tau_f(\varphi) \rrbracket_{X, \gamma}\\
 &= X \setminus \llbracket \varphi \rrbracket_{X, \llbracket f \rrbracket_X \circ \gamma}\\
 &= \llbracket \neg \varphi \rrbracket_{X, \llbracket f \rrbracket_X \circ \gamma}
\end{align}
For conjunctions:
\begin{align}
\llbracket \tau_f (\bigwedge (\varphi_i)_{i \in I}) \rrbracket_{X, \gamma} &= \llbracket \bigwedge (\tau_f(\varphi_i))_{i \in I} \rrbracket_{X, \gamma}\\
 &= \bigcap_{i \in I} \llbracket \tau_f(\varphi_i) \rrbracket_{X, \gamma}\\
 &= \bigcap_{\varphi \in \Phi} \llbracket \varphi_i \rrbracket_{X, \llbracket f \rrbracket_X \circ \gamma}\\
 &= \llbracket \bigwedge (\varphi_i)_{i \in I} \rrbracket_{\llbracket f \rrbracket_X \circ \gamma}
\end{align}
For modality expression $\bigcirc$ of arity $\alpha$:
\begin{align}
\llbracket \tau_f(\bigcirc (\varphi_i)_{i \in \alpha}) \rrbracket_{X, \gamma} &= \llbracket \bigcirc^f \{ \tau_f(\varphi_i))_{i \in \alpha} \rrbracket_{X, \gamma}\\
 &= \gamma^{-1} \circ ((2 * \llbracket f \rrbracket) \circ \llbracket \bigcirc \rrbracket)_X((\llbracket \tau_f(\varphi_i) \rrbracket_{X, \gamma})_{i \in \alpha})\\
 &= (\llbracket f \rrbracket_X \circ \gamma)^{-1} \circ \llbracket \bigcirc \rrbracket_X ((\llbracket \tau_f(\varphi_i) \rrbracket_{X, \gamma})_{i \in \alpha})\\
 &= \llbracket \bigcirc (\varphi_i)_{i \in \alpha} \rrbracket_{X, \llbracket f \rrbracket_X \circ \gamma}
\end{align}
For adaptation modalities:
\begin{align}
\llbracket \tau_f(f'\varphi) \rrbracket_{X, \gamma} &= \llbracket \tau_{f' \circ f}(\varphi) \rrbracket_{X, \gamma}\\
 &= \llbracket \varphi \rrbracket_{X, \llbracket f' \circ f \rrbracket_X \circ \gamma}\\
 &= \llbracket \varphi \rrbracket_{X, \llbracket f' \rrbracket_X \circ \llbracket f \rrbracket_X \circ \gamma}\\
 &= \llbracket f' \varphi \rrbracket_{X, \llbracket f \rrbracket_X \circ \gamma}\\
\end{align}
\end{proof}
}

\begin{theorem}
The semantics of fibred coalgebraic logic respects behavioural equivalence.
\end{theorem}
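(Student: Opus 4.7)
My plan is to reduce the statement to the well-known invariance of standard predicate-lifting coalgebraic logic, using the translation $\tau$ together with Proposition~\ref{prop:semtrans} to eliminate adaptation modalities.

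First I would specialise Proposition~\ref{prop:semtrans} to the case $f = id_A$. Since $\llbracket - \rrbracket^S$ is a strict monoidal functor, $\llbracket id_A \rrbracket^S = id_{\llbracket A \rrbracket^S}$, so the composite $\llbracket id_A \rrbracket^S_X \circ \gamma$ is just $\gamma$ and the proposition collapses to $\llbracket \varphi \rrbracket_{X, \gamma} = \llbracket \tau_{id_A}(\varphi) \rrbracket_{X, \gamma}$ for every $A$-formula $\varphi$ and every $\llbracket A \rrbracket^S$-coalgebra $(X, \gamma)$. A short structural induction then shows that for any $\fibsigcat{\Phi}$-morphism $g$ and any formula $\psi$ in the source fibre, $\tau_g(\psi)$ is free of adaptation modalities: the only non-trivial case is $\tau_g(f'\psi') = \tau_{f' \circ g}(\psi')$, which translates a strictly smaller subformula and so is covered by the inductive hypothesis. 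In particular, $\tau_{id_A}(\varphi)$ is an $A$-formula built only from logical connectives and applications of modality expressions $\bigcirc : \modtype{A}{\alpha}$, whose semantic interpretations are genuine predicate liftings for the endofunctor $\llbracket A \rrbracket^S$ by Definition~\ref{def:modsem}.

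At this stage $\tau_{id_A}(\varphi)$ lives in an ordinary predicate-lifting coalgebraic logic over $\llbracket A \rrbracket^S$, and I would invoke the standard invariance theorem for such logics: a routine induction using the naturality of predicate liftings shows that coalgebra morphisms reflect truth of formulas, so behaviourally equivalent states of an $\llbracket A \rrbracket^S$-coalgebra satisfy the same formulas. Combining this with the semantic equality from the first step yields the claim for every formula in the original syntax.

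The main piece of bookkeeping is being precise about what behavioural equivalence means here: it should be read fibrewise, i.e., between two $\llbracket A \rrbracket^S$-coalgebras in the classical coalgebraic sense (a span of coalgebra morphisms into a common cospan). A secondary obligation is to confirm that every modality expression really does denote a predicate lifting of the correct arity, but this is immediate from Definition~\ref{def:modsem}, since predicate liftings are manifestly closed under the five clauses, using vertical/horizontal composition of natural transformations and the pointwise operations $\neg$ and $\bigwedge$.
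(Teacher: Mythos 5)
Your proposal is correct and follows essentially the same route as the paper: specialise Proposition~\ref{prop:semtrans} to $f = id$, observe that the translated formula is an adaptation-free formula of ordinary predicate-lifting coalgebraic logic over $\llbracket A \rrbracket^S$, and appeal to the standard invariance of that logic under behavioural equivalence. You simply make explicit two steps the paper leaves implicit, namely the induction showing $\tau_g$ eliminates all adaptation modalities and the closure of predicate liftings under the clauses of Definition~\ref{def:modsem}.
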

\begin{proof}
By setting $f$ to the identity in proposition \ref{prop:semtrans} we get:
\begin{equation}
\llbracket \varphi \rrbracket_{X, \gamma} = \llbracket \tau_1(\varphi) \rrbracket_{X, \gamma}
\end{equation}
So the semantics of fibred coalgebraic logic is equivalent to the semantics of suitable formulae
in standard coalgebraic logic with predicate liftings, and this respects behavioural equivalence.
\end{proof}

\begin{example}[Simple combination of modality expressions]
\label{ex:kripke}
For a unary functor $F: \refcset \rightarrow \refcset$, and arbitrary set $A$, for each $a \in A$ we have
an obvious evaluation natural transformation $\eval{a} : F(-)^A \Rightarrow F(-)$.

Now for signature functor $\mathcal{P}$ (the powerset functor), giving Kripke frames as coalgebras, the
semantics of the usual $\Box$ modality is given by the following predicate lifting:
\begin{equation}
\llbracket \Box \rrbracket_X(U) := \mathcal{P}(U)
\end{equation}
If we consider the signature functor $\mathcal{P}(-)^A$ for (unbounded) labelled transition systems, the usual
$\Box_a$ modality can be constructed as the modality expression $\Box^{\eval{a}}$
\end{example}

\subsection{Semantics of Modality Expressions}
\label{sec:semmod}
In this section we consider some properties of predicate liftings such as monotonicity, continuity and being a separating
set, and how this is preserved under some of the composition operations described in section \ref{sec:fiblog}. We restrict
our attention to unary predicate liftings to simplify the presentation.

\begin{lemma}
\label{lem:compose}
Let $\Phi$ be a fibred modal signature and $S$ be a $\Phi$-structure. Let $\bigcirc : A \rightarrow A$ and $\bigcirc' : B \rightarrow B$
be modality expressions. Then if $\llbracket \bigcirc \rrbracket$ and $\llbracket \bigcirc' \rrbracket$ are monotone (continuous) then
$\llbracket \thenlift{\bigcirc}{\bigcirc'} \rrbracket$ is monotone (continuous).
\end{lemma}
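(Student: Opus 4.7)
The plan is to unpack the definition of $\llbracket \thenlift{\bigcirc}{\bigcirc'} \rrbracket$ at an arbitrary component and then exhibit it as an ordinary composition of two maps, each of which is already known to be monotone (respectively continuous). Since both monotonicity and continuity of maps between powersets are preserved by composition, the result will follow immediately.

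Concretely, I would first recall that by Definition~\ref{def:modsem} we have
\begin{equation}
\llbracket \thenlift{\bigcirc}{\bigcirc'} \rrbracket := (\llbracket \bigcirc' \rrbracket * \llbracket A \rrbracket^S) \circ \llbracket \bigcirc \rrbracket,
\end{equation}
and then evaluate this at a component $X$. Using the standard formula for horizontal composition applied at an object, $(\llbracket \bigcirc' \rrbracket * \llbracket A \rrbracket^S)_X = \llbracket \bigcirc' \rrbracket_{\llbracket A \rrbracket^S_X}$, so the map
\begin{equation}
\llbracket \thenlift{\bigcirc}{\bigcirc'} \rrbracket_X : 2(X) \longrightarrow 2(\llbracket B \rrbracket^S \llbracket A \rrbracket^S X)
\end{equation}
is just the composite $\llbracket \bigcirc' \rrbracket_{\llbracket A \rrbracket^S_X} \circ \llbracket \bigcirc \rrbracket_X$.

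For the monotone case, I would then note that by hypothesis $\llbracket \bigcirc \rrbracket_X$ is monotone as a map of powersets, and $\llbracket \bigcirc' \rrbracket_{\llbracket A \rrbracket^S_X}$ is monotone by the hypothesis on $\bigcirc'$ instantiated at the object $\llbracket A \rrbracket^S_X$. The composition of two monotone maps on powersets is monotone, which gives monotonicity of $\llbracket \thenlift{\bigcirc}{\bigcirc'} \rrbracket_X$ for every $X$, as required. The continuous case is carried out by exactly the same reasoning, using instead that the composition of two maps preserving directed suprema also preserves directed suprema.

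I do not anticipate a genuine obstacle: the only point requiring care is the correct unpacking of the horizontal composition $\llbracket \bigcirc' \rrbracket * \llbracket A \rrbracket^S$ at a component, to confirm that it contributes the factor $\llbracket \bigcirc' \rrbracket_{\llbracket A \rrbracket^S_X}$ rather than something whiskered on the wrong side. Once this bookkeeping is done, the lemma reduces to the elementary fact that composition preserves both monotonicity and continuity pointwise in $X$.
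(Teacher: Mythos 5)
Your proof is correct, and it is the intended argument: the paper in fact omits the proof of this lemma entirely (treating it as routine), and the only content is exactly what you supply --- unpacking the component of $(\llbracket \bigcirc' \rrbracket * \llbracket A \rrbracket^S) \circ \llbracket \bigcirc \rrbracket$ at $X$ as $\llbracket \bigcirc' \rrbracket_{\llbracket A \rrbracket^S X} \circ \llbracket \bigcirc \rrbracket_X$ and invoking closure of monotone (respectively continuous) maps under composition. Your whiskering bookkeeping is also right: the right-hand factor $\llbracket A \rrbracket^S$ contributes by re-indexing the component of $\llbracket \bigcirc' \rrbracket$ at the set $\llbracket A \rrbracket^S X$, which is precisely why the hypothesis on $\bigcirc'$ must be instantiated there.
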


\begin{lemma}
\label{lem:extend}
Let $\Phi$ be a fibred modal signature and $S$ a $\Phi$-structure. Let $\bigcirc : A \rightarrow A$ be a modality expression
and $f: B \rightarrow A$ a \fibsigcat{\Phi} morphism. Then if $\llbracket \bigcirc \rrbracket$ is monotone (continuous) then 
$\llbracket \bigcirc^f \rrbracket$ is monotone (continuous).
\end{lemma}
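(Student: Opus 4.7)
The plan is to unfold the semantics of $\bigcirc^f$ and reduce the claim to standard lattice-theoretic properties of inverse image maps. By Definition~\ref{def:modsem} we have $\llbracket \bigcirc^f \rrbracket = (2 * \llbracket f \rrbracket^S) \circ \llbracket \bigcirc \rrbracket$, so on the $X$-component
\[
\llbracket \bigcirc^f \rrbracket_X(U) \;=\; (\llbracket f \rrbracket^S_X)^{-1}\bigl(\llbracket \bigcirc \rrbracket_X(U)\bigr),
\]
where $(\llbracket f \rrbracket^S_X)^{-1}\colon \mathcal{P}(\llbracket A \rrbracket^S X) \to \mathcal{P}(\llbracket B \rrbracket^S X)$ is the inverse image map along the function $\llbracket f \rrbracket^S_X$. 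Thus each component of $\llbracket \bigcirc^f \rrbracket$ is simply the composition of the corresponding component of $\llbracket \bigcirc \rrbracket$ with an inverse image operation.

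The argument then splits into two parallel cases, both trivial once the above rewriting is in place. For monotonicity, inverse image preserves inclusions, and a composition of monotone maps is monotone, so the result follows from the hypothesis that $\llbracket \bigcirc \rrbracket_X$ is monotone. For continuity, inverse image is in fact a complete lattice homomorphism, preserving arbitrary joins and meets; so in particular it preserves whichever directed joins or codirected meets are used to define continuity of a predicate lifting. Composing with a continuous $\llbracket \bigcirc \rrbracket_X$ therefore yields a continuous map on each component.

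There is essentially no obstacle here: the content of the lemma is nothing more than the fact that inverse image along any function is both monotone and continuous, a standard textbook observation. The only point requiring care is orientation: although $f\colon B \to A$ is a morphism in $\fibsigcat{\Phi}$ running ``forwards'', the contravariance of $2$ turns the horizontal composite $2 * \llbracket f \rrbracket^S$ into an inverse image map running in the opposite direction, which is precisely what is needed for $\llbracket \bigcirc^f \rrbracket_X(U)$ to land in $\mathcal{P}(\llbracket B \rrbracket^S X)$ as required by the type of $\bigcirc^f$.
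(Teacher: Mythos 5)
Your proof is correct: unfolding $\llbracket \bigcirc^f \rrbracket_X(U) = (\llbracket f \rrbracket^S_X)^{-1}(\llbracket \bigcirc \rrbracket_X(U))$ and observing that inverse image preserves inclusions and arbitrary unions and intersections is exactly the routine argument the lemma calls for, and you have the variance of the contravariant powerset functor oriented correctly. The paper states this lemma without any proof at all (treating it as immediate), so there is nothing to compare against; your write-up simply supplies the omitted details.
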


We now consider how expressive sets of predicate liftings are preserved under various operations. 
Results of this type are known and described in \citep{Pattinson2001}. We provide some results here
for completeness and in a form suitable for application in later examples.

\minorproof{
\begin{lemma}
\label{lem:prodtech}
Let $(F_i)_{i \in I} : \refcset \rightarrow \refcset$ be a family of endofunctors, $(\lambda^i : 2 \Rightarrow 2 \circ F_i)_{i \in I}
$ a family of predicate liftings, and 
and $\pi^i : \prod_{j \in I} F_j \Rightarrow F_i$ the projection natural transformations. We have:
\begin{equation}
p \in ((2 * \pi^i) \circ \lambda^i)_X(U) \iff \pi^i_X(p) \in \lambda^i_X(U)
\end{equation}
\end{lemma}
\minorproof{
\begin{proof}
By expanding definitions:
\begin{equation}
((2 * \pi^i) \circ \lambda^i)_X(U) = (\pi^i_X)^{-1}(\lambda^i_X(U))
\end{equation}
and the claim follows immediately.
\end{proof}
}
}

Expressivity can be lifted to products and exponentials from a fixed domain.

\begin{lemma}
\label{lem:prod}
Let $\Phi$ be a fibred modal signature and $S$ a $\Phi$-structure.
Let $(A_i)_{i \in I}$ be a family of objects in $\fibsigcat{\Phi}$. 
Assume $\llbracket A \rrbracket^S = \prod_{i \in I} \llbracket A_i \rrbracket^S$
and that there exist \fibsigcat{\Phi} morphisms $(\pi^i : A \rightarrow A_i)_{i \in I}$ such that $\llbracket \pi^i \rrbracket^S$
is the corresponding projection natural transformation.
For each $i \in I$ let $(\llbracket \bigcirc_{i,j} \rrbracket)_{j \in J_i}$
be a separating set of predicate liftings for $\llbracket A_i \rrbracket^S$. 
Then the predicate liftings $(\llbracket \bigcirc_{i,j}^{\pi_i} \rrbracket)_{i \in I, j \in J}$ are separating for 
$\llbracket A \rrbracket^S$.
\end{lemma}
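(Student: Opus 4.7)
The plan is a direct definition-chase that reduces separation on the product to separation in each factor, using the adaptation semantics for the projections. Let $X$ be a set and $p, q \in \llbracket A \rrbracket^S(X) = \prod_{i \in I} \llbracket A_i \rrbracket^S(X)$ be distinct. By the universal property of the product, distinctness means there is some index $i \in I$ such that $\llbracket \pi^i \rrbracket^S_X(p) \neq \llbracket \pi^i \rrbracket^S_X(q)$ in $\llbracket A_i \rrbracket^S(X)$.

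Since $(\llbracket \bigcirc_{i,j} \rrbracket)_{j \in J_i}$ is assumed to be a separating set for $\llbracket A_i \rrbracket^S$, I can find $j \in J_i$ and a subset $U \subseteq X$ such that exactly one of $\llbracket \pi^i \rrbracket^S_X(p)$, $\llbracket \pi^i \rrbracket^S_X(q)$ lies in $\llbracket \bigcirc_{i,j} \rrbracket_X(U)$. The next step is to transport this separating witness back to a predicate on the product via the extension operation $(-)^{\pi^i}$.

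Unfolding Definition~\ref{def:modsem}, the predicate lifting associated to the modality expression $\bigcirc_{i,j}^{\pi^i}$ is
\begin{equation}
\llbracket \bigcirc_{i,j}^{\pi^i} \rrbracket = (2 * \llbracket \pi^i \rrbracket^S) \circ \llbracket \bigcirc_{i,j} \rrbracket,
\end{equation}
whose component at $X$ sends a subset $U \subseteq X$ to $(\llbracket \pi^i \rrbracket^S_X)^{-1}\bigl(\llbracket \bigcirc_{i,j} \rrbracket_X(U)\bigr)$. This is exactly the content of Lemma~\ref{lem:prodtech} instantiated at the projection transformation, and it yields the key equivalence
\begin{equation}
r \in \llbracket \bigcirc_{i,j}^{\pi^i} \rrbracket_X(U) \iff \llbracket \pi^i \rrbracket^S_X(r) \in \llbracket \bigcirc_{i,j} \rrbracket_X(U)
\end{equation}
for any $r \in \llbracket A \rrbracket^S(X)$, in particular for $r = p$ and $r = q$. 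Combining this with the separating witness for the $i$-th factor, $U$ distinguishes $p$ from $q$ under $\llbracket \bigcirc_{i,j}^{\pi^i} \rrbracket_X$, establishing that the whole family $(\llbracket \bigcirc_{i,j}^{\pi^i} \rrbracket)_{i \in I,\, j \in J_i}$ is separating.

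There is no real obstacle here: the essential content is that adaptation along a morphism pulls predicates back along the interpretation of that morphism, which is built directly into the definition of $\bigcirc^f$. The only care required is to verify that the extension rule indeed produces a modality expression of type $A \to A$ out of $\bigcirc_{i,j} : A_i \to A_i$ and $\pi^i : A \to A_i$ in $\fibsigcat{\Phi}$, which is immediate from the typing rules.
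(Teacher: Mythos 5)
Your proof is correct and follows essentially the same route as the paper's: reduce distinctness in the product to distinctness in some component via a projection, invoke the separating set for that factor, and pull the witness back using the fact that $\llbracket \bigcirc_{i,j}^{\pi^i} \rrbracket_X(U) = (\llbracket \pi^i \rrbracket^S_X)^{-1}(\llbracket \bigcirc_{i,j} \rrbracket_X(U))$, which is exactly Lemma~\ref{lem:prodtech}. No differences worth noting.
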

\minorproof{
\begin{proof}
Assume $p, q \in \prod_{i \in I} \llbracket A_i \rrbracket^S(X)$ with $p \neq q$. Then there exists $\pi^i$ such that $\pi^i_X(p) \neq \pi^i_X(q)$.
Then by assumption without loss of generality there must be a $U \subseteq X$ and $\bigcirc^{i,j}$ such that:
\begin{equation}
\pi^i_X(p) \in \llbracket \bigcirc^{i,j} \rrbracket_X(U) \mbox{ and } \pi^i_X(q) \not \in \llbracket \bigcirc^{i,j} \rrbracket_X(U)
\end{equation}
the claim then follows from lemma \ref{lem:prodtech} and the semantics in definition  \ref{def:modsem}.
\end{proof}
}

\minorproof{
\begin{lemma}
\label{lem:evaltech}
Let $F : \refcset \rightarrow \refcset$ be an endofunctor, $\lambda : 2 \rightarrow 2 \circ F$ a
predicate lifting, $A$ a set and $a \in A$. Also let \eval{a} be as defined in example \ref{ex:kripke},
then:
\begin{equation}
g(a) \in \lambda_X(U) \iff g \in ((2 * \eval{a}) \circ \lambda)_X(U)
\end{equation}
\end{lemma}
\minorproof{
\begin{proof}
We have:
\begin{align}
g \in ((2 * \eval{a}) \circ \lambda)_X(U) &\iff  g \in (\eval{a}_X)^{-1} (\lambda_X(U))\\
 &\iff g(a) \in \lambda_X(U)
\end{align}
\end{proof}
}
}

\begin{lemma}
\label{lem:evalsep}
Let $\Phi$ be a fibred modal signature and $S$ a $\Phi$-structure.
Let $A,B$ be an objects in \fibsigcat{\Phi} with $\llbracket B \rrbracket = \llbracket A \rrbracket^A$.
Also let $(\eval{a} : B \rightarrow A)_{a \in A}$ be \fibsigcat{\Phi} morphisms such that $\llbracket \eval{a} \rrbracket$
is the corresponding evaluation natural transformation as defined in example \ref{ex:kripke}.
Let $(\llbracket \bigcirc_i \rrbracket)_{i \in I}$ be a
separating set of predicate liftings for $\llbracket A \rrbracket^S$. 
Then the predicate liftings
$(\llbracket \bigcirc_i^{\eval{a}} \rrbracket)_{i \in I, a \in A}$
are separating for  $\llbracket B \rrbracket^S$.
\end{lemma}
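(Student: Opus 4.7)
The plan is to mirror the argument used for Lemma \ref{lem:prod}, substituting Lemma \ref{lem:evaltech} (the evaluation analogue of Lemma \ref{lem:prodtech}) for the projection technical lemma. The structural observation is that when $\llbracket B \rrbracket^S = \llbracket A \rrbracket^S(-)^A$, a point of $\llbracket B \rrbracket^S(X)$ is just a function $A \rightarrow \llbracket A \rrbracket^S(X)$, and the evaluation natural transformation $\llbracket \eval{a} \rrbracket$ projects out its value at $a$. So separation upstairs should reduce to separation at some coordinate $a \in A$, exactly as in the product case.

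Concretely, I would fix a set $X$ and suppose $g, h \in \llbracket B \rrbracket^S(X) = \llbracket A \rrbracket^S(X)^A$ with $g \neq h$. Then there exists some $a \in A$ with $g(a) \neq h(a)$, equivalently $\llbracket \eval{a} \rrbracket_X(g) \neq \llbracket \eval{a} \rrbracket_X(h)$ in $\llbracket A \rrbracket^S(X)$. Applying the hypothesis that $(\llbracket \bigcirc_i \rrbracket)_{i \in I}$ separates $\llbracket A \rrbracket^S$, we obtain some $i \in I$ and some $U \subseteq X$ with, without loss of generality,
\begin{equation}
g(a) \in \llbracket \bigcirc_i \rrbracket_X(U) \quad \text{and} \quad h(a) \notin \llbracket \bigcirc_i \rrbracket_X(U).
\end{equation}

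The final step is to translate this separation back up to $\llbracket B \rrbracket^S$ via the adaptation modality $\bigcirc_i^{\eval{a}}$. By Definition \ref{def:modsem} we have $\llbracket \bigcirc_i^{\eval{a}} \rrbracket = (2 * \llbracket \eval{a} \rrbracket) \circ \llbracket \bigcirc_i \rrbracket$, and Lemma \ref{lem:evaltech} then gives
\begin{equation}
g \in \llbracket \bigcirc_i^{\eval{a}} \rrbracket_X(U) \iff g(a) \in \llbracket \bigcirc_i \rrbracket_X(U),
\end{equation}
and symmetrically for $h$. Combining with the previous step, $\bigcirc_i^{\eval{a}}$ distinguishes $g$ and $h$, so the family $(\llbracket \bigcirc_i^{\eval{a}} \rrbracket)_{i \in I,\, a \in A}$ separates $\llbracket B \rrbracket^S$ as required.

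There is no genuine obstacle; the proof is almost entirely bookkeeping. The only point that needs a moment of care is recognising that Lemma \ref{lem:evaltech} plays exactly the role for exponentials that Lemma \ref{lem:prodtech} plays for products, after which the argument is a verbatim adaptation of the proof of Lemma \ref{lem:prod}.
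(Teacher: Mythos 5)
Your proof is correct and follows essentially the same route as the paper's own argument: pick a coordinate $a$ where the two functions differ, apply the separating set for $\llbracket A \rrbracket^S$ there, and transfer back via Lemma \ref{lem:evaltech} and Definition \ref{def:modsem}. No gaps.
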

\minorproof{
\begin{proof}
Assume $g, g' \in \llbracket A \rrbracket(X)^A$ with $g \neq g'$. Then there must exist $a \in A$ such that
$g(a) \neq g'(a)$. By assumption, without loss of generality there exists $U \subseteq X$ and
$\llbracket \bigcirc_i \rrbracket$ such that $g(a) \in \llbracket \bigcirc_i \rrbracket_X(U)$ but $g'(a) \not \in \llbracket \bigcirc_i \rrbracket_X(U)$.The claim then follows from lemma \ref{lem:evaltech} and the semantics in definition \ref{def:modsem}.
\end{proof}
}

In general if we have separating sets of predicate liftings for two endofunctors, they do not combine (in any way) to
give a separating set for the composite functor. This is easily seen as, for example, the functor $\mathcal{P}_{\omega}$
has a separating set of liftings, but no separating set exists for $\mathcal{P}_{\omega} \circ \mathcal{P}_{\omega}$.
(See parts of (1) and (5) of example 23 in \citep{Schroder2008}). We examine a simple common case that we will require later, 
in which the behaviour is much better. The following notions will be useful:

\begin{definition}
Let $T : \refcset \rightarrow \refcset$ be an endofunctor.
Consider a set of predicate liftings $\{ \lambda^i \}$.
\begin{itemize}
 \item The liftings are said to \define{separate by singletons} if for an arbitrary set $X$, and $x,y \in T(X)$, 
  it is sufficient to consider the image of singleton sets under the $\lambda^i$ to separate $x$ and $y$.
 \item The liftings are said to be \define{mutually surjective on singletons} if for an arbitrary set $X$
 and each $t \in T X$ the singleton set $\{ t \}$ is in $\im{\lambda^i_X}$ for some $\lambda^i$.
\end{itemize}
\end{definition}

\begin{lemma}
For endofunctor $T : \refcset \rightarrow \refcset$,
any mutually surjective on singletons set of predicate liftings
is a separating set.
\end{lemma}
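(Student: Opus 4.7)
The plan is to unpack both definitions and observe that the conclusion is essentially immediate. Fix an arbitrary set $X$ and two distinct points $x, y \in TX$; our goal is to produce some lifting $\lambda^i$ in the given collection and some $U \subseteq X$ such that exactly one of $x, y$ lies in $\lambda^i_X(U)$.

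First I would invoke mutual surjectivity on singletons applied to the element $x \in TX$. This hands us a predicate lifting $\lambda^i$ in the collection and a subset $U \subseteq X$ with $\lambda^i_X(U) = \{x\}$. Then $x \in \lambda^i_X(U)$ holds trivially, while $y \notin \lambda^i_X(U)$ follows immediately from $y \neq x$ together with $\lambda^i_X(U)$ being the singleton $\{x\}$. This witnesses that $\lambda^i$ separates $x$ from $y$, so the collection is separating.

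There is no real obstacle here: the definition of ``mutually surjective on singletons'' has been crafted precisely so that any singleton in $TX$ is realised as the image of some subset under some lifting, and singletons are the maximally discriminating subsets of $TX$, so distinct elements are automatically separated. The only thing worth being careful about is the quantifier order in the definition of separation versus that in mutual surjectivity; but since the chosen $\lambda^i$ and $U$ may depend on $x$ (and implicitly on $y$), this matches the existential form required for separation, so no additional argument is needed.
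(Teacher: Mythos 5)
Your proof is correct and is essentially identical to the paper's: both apply mutual surjectivity on singletons to $x$ to obtain $\lambda^i$ and $U$ with $\lambda^i_X(U) = \{x\}$, whence $x$ is separated from any $y \neq x$.
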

\minorproof{
\begin{proof}
For any set $X$ and $x,y \in T(X)$ with $x \neq y$ there exists $\lambda^i$ and $U \subseteq X$ such that:
\begin{equation}
\lambda^i_X(U) = \{x\}
\end{equation}
and then clearly:
\begin{equation}
x \in \lambda^i_X(U) \mbox{ and } y \not \in \lambda^i_X(U)
\end{equation}
\end{proof}
}

\begin{lemma}
\label{lem:sepcompsing}
Let $\Phi$ be a fibred modal signature and $S$ a $\Phi$-structure.
Let $A, B$ objects in \fibsigcat{\Phi}, $(\llbracket \bigcirc^B_i \rrbracket)_{i \in I}$
a set of predicate liftings on $\llbracket B \rrbracket$ that are mutually surjective on singletons, 
and $(\llbracket \bigcirc^A_j \rrbracket)_{j \in J}$ a separating
set of predicate liftings on $\llbracket A \rrbracket$ that separate by singletons. 
Then the liftings
$(\llbracket \thenlift{\bigcirc^B_i}{\bigcirc^A_j} \rrbracket) _{i \in I, j \in J }$
are separating for $\llbracket A \otimes B \rrbracket^S$.
\end{lemma}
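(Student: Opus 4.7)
The plan is to unfold the composition clause of Definition \ref{def:modsem} and then chain the two singleton hypotheses: separation-by-singletons of the outer $A$-liftings reduces the task of discriminating two points of $\llbracket A \otimes B \rrbracket^S(X) = \llbracket A \rrbracket^S(\llbracket B \rrbracket^S(X))$ to producing a suitable singleton inside $\llbracket B \rrbracket^S(X)$, while mutual-surjectivity-on-singletons of the inner $B$-liftings exhibits every such singleton as an image $\llbracket \bigcirc^B_i \rrbracket_X(U)$ for some $U \subseteq X$. Composing these two exhibitions should deliver the required separating lifting in the family $\llbracket \thenlift{\bigcirc^B_i}{\bigcirc^A_j} \rrbracket$.

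Concretely, I would fix a set $X$, write $Y := \llbracket B \rrbracket^S(X)$, and take distinct $p, q \in \llbracket A \rrbracket^S(Y)$. First, by the separation-by-singletons assumption on $(\llbracket \bigcirc^A_j \rrbracket)_{j \in J}$ applied at the carrier $Y$, there exist $j \in J$ and a point $y \in Y$ such that, without loss of generality, $p \in \llbracket \bigcirc^A_j \rrbracket_Y(\{y\})$ while $q \notin \llbracket \bigcirc^A_j \rrbracket_Y(\{y\})$. Second, the mutual-surjectivity-on-singletons assumption on $(\llbracket \bigcirc^B_i \rrbracket)_{i \in I}$ applied to this $y \in \llbracket B \rrbracket^S(X)$ furnishes some $i \in I$ and $U \subseteq X$ with $\llbracket \bigcirc^B_i \rrbracket_X(U) = \{y\}$. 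Finally, reading the semantics of $\lhd$ componentwise (a horizontal composite $\llbracket \bigcirc^A_j \rrbracket * \llbracket B \rrbracket^S$ followed by a vertical composite with $\llbracket \bigcirc^B_i \rrbracket$) yields
\begin{equation*}
\llbracket \thenlift{\bigcirc^B_i}{\bigcirc^A_j} \rrbracket_X(U) \;=\; \llbracket \bigcirc^A_j \rrbracket_{Y}\bigl(\llbracket \bigcirc^B_i \rrbracket_X(U)\bigr) \;=\; \llbracket \bigcirc^A_j \rrbracket_Y(\{y\}),
\end{equation*}
which contains $p$ and excludes $q$, so this composite lifting separates the two points.

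I do not anticipate a serious obstacle: the two singleton conditions have been tailored to compose in exactly this way, and the form of the semantics of $\lhd$ literally matches the pattern ``apply the inner lifting at $X$, then apply the outer lifting at $\llbracket B \rrbracket^S(X)$''. The only mildly subtle point to watch is keeping the levels straight, ensuring that the outer lifting $\llbracket \bigcirc^A_j \rrbracket$ is instantiated at the carrier $Y$ rather than at $X$ so that feeding it the singleton $\{y\} \subseteq Y$ type-checks and produces a subset of $\llbracket A \otimes B \rrbracket^S(X)$. Once this bookkeeping is in place, the argument is essentially forced.
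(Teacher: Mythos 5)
Your proof is correct and follows essentially the same route as the paper's: use separation by singletons of the outer $A$-liftings at the carrier $\llbracket B \rrbracket^S(X)$ to obtain a discriminating singleton $\{y\}$, then use mutual surjectivity on singletons of the inner $B$-liftings to realise $\{y\}$ as $\llbracket \bigcirc^B_i \rrbracket_X(U)$, and conclude by unfolding the semantics of $\lhd$. Your explicit unpacking of the horizontal/vertical composite and the care about instantiating the outer lifting at $Y$ rather than $X$ is a slightly more detailed rendering of the same argument.
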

\minorproof{
\begin{proof}
Let $x, y \in \llbracket A \rrbracket^S \circ \llbracket B \rrbracket^S (X)$ with $x \neq y$. Then without loss of generality, by assumption, there exists $\bigcirc^A_j$
and $\{u\} \subseteq S(X)$ such that:
\begin{equation}
x \in \llbracket \bigcirc^A_j \rrbracket_{S(X)}(\{u\}) \mbox{ and } y \not \in \llbracket \bigcirc^A_j \rrbracket_{S(X)}(\{u\})
\end{equation}
Now as the $(\llbracket \bigcirc^B_i \rrbracket)_{i \in I}$ are mutually surjective on singletons, 
there exist $V \subseteq X$ and $\bigcirc^B_i$ such that:
\begin{equation}
\llbracket \bigcirc^B_i \rrbracket_X(V) = \{ u \}
\end{equation}
and so:
\begin{equation}
x \in \llbracket \thenlift{\bigcirc^B_i}{\bigcirc^A_j} \rrbracket_X(V) \mbox{ and } y \not \in \llbracket \thenlift{\bigcirc^B_i}{\bigcirc^A_j} \rrbracket_X(V)
\end{equation}
\end{proof}
}

\section{Quantum Applications}
\label{sec:qapp}
We now consider a suitable signature functor for modelling quantum systems. In
\citep{Abramsky2010} a signature functor describing a ``question and answer system'' for projective measurements
was used.  We instead introduce a new functor based upon distributions of measurement 
outcomes for different physical quantities. When reasoning about quantum protocols it is
common to consider measurements in a suitable basis, rather than projective measurements, 
and this signature functor make the physical quantities and distribution
over measurement outcomes explicit.

\subsection{Constructing a Fibred Logic for Quantum Systems}
As an extended example, we construct an expressive set of modalities for reasoning about quantum systems
using simple components from well understood areas such as labelled transition systems and probabilistic logics.
An alternative modular approach to the construction of coalgebraic logics is presented in \citep{CirsteaPattinson2007}, 
based on a notion of syntax constructors.
Preservation of properties of modalities, such as expressivity, under operations including composition, products and
coproducts is analyzed in \citep{Pattinson2001}, and is probably closer in spirit to the approach of this section.
Many proofs are omitted throughout this section for space reasons, all conclusions are based upon
the composition based ideas in section \ref{sec:semmod} and standard results, mainly from \citep{Schroder2008}.

\begin{definition}
Let $\distfunc : \refcset \rightarrow \refcset$ denote the finite distribution functor, defined
on objects as follows:
\begin{equation}
D(X) := \{ f : X \rightarrow [0,1] \mid f \mbox{ has finite support and } \Sigma_{x \in X} f(x) = 1 \}\\
\end{equation}
and on morphisms:
\begin{equation}
D(f : X \rightarrow Y)(g \in D(X))(y \in Y) := \Sigma_{x \in X. f(x) = y} g(x)
\end{equation}
\end{definition}

\begin{lemma}
\label{lem:distomegaacc}
The finite distribution functor \distfunc is $\omega$-accessible.
\end{lemma}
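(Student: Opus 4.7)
The plan is to use the standard characterization that an endofunctor on $\refcset$ is $\omega$-accessible exactly when, for every set $X$, the canonical comparison map $\operatorname{colim}_{X_0 \subseteq_{\mathrm{fin}} X} \distfunc(X_0) \to \distfunc(X)$ (taken over the filtered poset of finite subsets of $X$ under inclusion) is a bijection. Verifying this for $\distfunc$ reduces to two routine checks, both of which exploit the observation that the defining finite support condition already matches the structure of this colimit.

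First I would establish surjectivity of the comparison. Given $g \in \distfunc(X)$, let $X_0 := \operatorname{supp}(g)$, which is finite by the definition of $\distfunc$, and set $g_0 := g \circ \iota_{X_0}$, where $\iota_{X_0} : X_0 \hookrightarrow X$ is the inclusion. Since $g$ vanishes outside $X_0$, the normalisation $\Sigma_{x \in X_0} g_0(x) = \Sigma_{x \in X} g(x) = 1$ shows $g_0 \in \distfunc(X_0)$, and unfolding the action of $\distfunc$ on morphisms immediately gives $\distfunc(\iota_{X_0})(g_0) = g$.

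Second I would verify the agreement condition needed for injectivity of the colimit comparison. If $g_0 \in \distfunc(X_0)$ and $g_1 \in \distfunc(X_1)$, for finite $X_0, X_1 \subseteq X$, are sent to the same $g \in \distfunc(X)$ under the inclusions, then $\operatorname{supp}(g) \subseteq X_0 \cap X_1$ and both $g_0$ and $g_1$ agree with $g$ pointwise on this support while vanishing elsewhere. Consequently, under the inclusions of $X_0$ and $X_1$ into the finite set $X_0 \cup X_1$, they are identified in $\distfunc(X_0 \cup X_1)$, which is exactly what is required for the colimit to identify them.

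I do not anticipate a serious obstacle. The only subtlety is appealing to the correct characterization of $\omega$-accessibility on $\refcset$; the computational content is simply the observation that a finitely supported distribution on $X$ is the pushforward along an inclusion of a distribution on its (finite) support, and that any two such representatives can be reconciled on the union of their underlying finite sets.
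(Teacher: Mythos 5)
Your argument is correct, and its computational core (a finitely supported distribution is the pushforward along the inclusion of its support of a distribution on that finite set) is exactly the paper's. The difference is in how much of the colimit comparison you verify: you check both surjectivity and the compatibility/injectivity condition for the canonical cocone $\operatorname{colim}_{X_0 \subseteq_{\mathrm{fin}} X} \distfunc(X_0) \to \distfunc(X)$, whereas the paper checks only the surjectivity half (every element of $\distfunc(X)$ lies in the image of $\distfunc(i)$ for a finite subset inclusion $i$) and then invokes Proposition 5.2 of Ad\'amek--Porst, which says that for set functors this boundedness condition already suffices for accessibility. Your route is more self-contained but relies on the (standard, and correct) characterization of $\omega$-accessibility via the canonical filtered colimits of finite subobjects, and it pays for that self-containedness with the extra injectivity check; the paper's route outsources precisely that check to the cited proposition. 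Your injectivity argument is sound: pushforward along an injection preserves supports pointwise, so two preimages of the same $g$ are both restrictions of $g$ and get identified at the stage $X_0 \cup X_1$ of the directed poset. Either proof is acceptable.
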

\minorproof{
\begin{proof}
Let $X$ be an arbitrary set.
For any $f \in D(X)$ the function $f$ has finite support.
It is then easy to check that $f$ is in the image of $\distfunc(i)$, where $i$ is the
inclusion of the support of $f$ into $X$.
The result follows from proposition 5.2 in \citep{AdamekPorst2004}.
\end{proof}
}
Now we introduce our two basic building block modalities from which all others will be constructed.
\begin{lemma}
\label{lem:deqlift}
For the finite distribution functor \distfunc, for each $p \in [0,1]$ there is a unary predicate lifting 
$\deqlift{p} : 2 \Rightarrow 2 \circ \distfunc$
given by:
\begin{equation}
\deqlift{p}_X(U) := \{ d \mid \Sigma_{u \in U} d(u) = p \}
\end{equation}
These modalities separate by singletons.
\end{lemma}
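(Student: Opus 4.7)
The plan is to verify two independent claims: first, that for each $p \in [0,1]$ the family $\deqlift{p}_X$ is natural in $X$, hence really defines a predicate lifting $2 \Rightarrow 2 \circ \distfunc$; and second, that this family separates points of $\distfunc(X)$ using only singleton subsets of $X$ as witness predicates.

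For naturality, I would fix $f : X \rightarrow Y$ and $U \subseteq Y$, and check that $\deqlift{p}_X(f^{-1}(U)) = \distfunc(f)^{-1}(\deqlift{p}_Y(U))$. Unfolding the right-hand side, membership of $d \in \distfunc(X)$ amounts to $\Sigma_{u \in U} \distfunc(f)(d)(u) = p$. Substituting the definition of $\distfunc$ on morphisms and using that the fibres $\{f^{-1}(\{u\})\}_{u \in U}$ partition $f^{-1}(U)$, this sum rewrites as $\Sigma_{x \in f^{-1}(U)} d(x)$, which is precisely the left-hand condition. Finite support of $d$ makes all the sums finite, so the reordering is unproblematic.

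For separation by singletons, suppose $d, d' \in \distfunc(X)$ with $d \neq d'$. Pick any $x \in X$ with $d(x) \neq d'(x)$ and set $p := d(x)$. Then $\Sigma_{u \in \{x\}} d(u) = d(x) = p$ shows $d \in \deqlift{p}_X(\{x\})$, while $\Sigma_{u \in \{x\}} d'(u) = d'(x) \neq p$ shows $d' \notin \deqlift{p}_X(\{x\})$. The separating witness is the singleton $\{x\}$, so the family indeed separates by singletons in the sense of the preceding definition.

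Neither step is a real obstacle: naturality collapses to a Fubini-style rearrangement along the fibres of $f$ restricted to $U$, and separation is a one-line choice of parameters. The only subtlety worth flagging is that we are exploiting the \emph{entire} family $\{\deqlift{p}\}_{p \in [0,1]}$ and allowing $p$ to vary with the pair $(d, d')$ we wish to distinguish; any single $\deqlift{p}$ on its own would of course fail to separate.
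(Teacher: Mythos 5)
Your proposal is correct and follows essentially the same route as the paper: the separation argument (choose $x$ with $d(x)\neq d'(x)$, set $p:=d(x)$, witness with the singleton $\{x\}$) is identical, and your naturality check simply spells out what the paper dismisses as straightforward. Your closing remark that $p$ must be allowed to vary with the pair being separated is a worthwhile clarification, but not a point of divergence.
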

\minorproof{
\begin{proof}
Checking naturality is straightforward. For separation by singletons,
let $X$ be some set and $f,g \in \distfunc{X}$ with $f \neq g$. Then there exists
$x \in X$ and $p \in [0,1]$ such that $f(x) = p \neq g(x)$. We then have that:
\begin{equation}
\deqlift{p}_X(\{ x \}) = \{ d \mid d(x) = p \}
\end{equation}
and so:
\begin{equation}
f \in \deqlift{p}(\{x\}) \mbox{ and } g \not \in \deqlift{p}(\{x\})
\end{equation}
\end{proof}
}

\begin{lemma}
\label{lem:detcertlift}
For a label set $\Sigma$, and $\sigma \in \Sigma$, define the unary predicate lifting $\detcertlift{\sigma} : 2 \Rightarrow 2 \circ (\Sigma \times (-))$ as follows:
\begin{equation}
\detcertlift{\sigma}(U) := \{ (\sigma, u) \mid u \in U \}
\end{equation}
These liftings are monotone and mutually surjective on singletons.
\end{lemma}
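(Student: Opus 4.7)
The plan is to verify the three claims in turn: naturality (making $\detcertlift{\sigma}$ a bona fide predicate lifting), monotonicity, and mutual surjectivity on singletons. The definition of $\detcertlift{\sigma}$ is transparent enough that each property reduces to a direct unfolding; I do not anticipate any real obstacle.

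For naturality, given $f : X \rightarrow Y$ and $U \subseteq Y$, I would expand both sides of the naturality square. On one side, $(\Sigma \times f)^{-1}(\detcertlift{\sigma}_Y(U))$ consists of pairs $(\tau,x) \in \Sigma \times X$ with $(\tau, f(x)) \in \{(\sigma,u) \mid u \in U\}$, which forces $\tau = \sigma$ and $f(x) \in U$. On the other side, $\detcertlift{\sigma}_X(f^{-1}(U)) = \{(\sigma, x) \mid x \in f^{-1}(U)\}$. These sets coincide.

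For monotonicity, if $U \subseteq V \subseteq X$, then $\{(\sigma, u) \mid u \in U\} \subseteq \{(\sigma, v) \mid v \in V\}$, so $\detcertlift{\sigma}_X(U) \subseteq \detcertlift{\sigma}_X(V)$. For mutual surjectivity on singletons, given an arbitrary element $(\sigma, x) \in \Sigma \times X$, I would pick the lifting indexed by the first component $\sigma$ and evaluate it at the singleton $\{x\} \subseteq X$: by definition $\detcertlift{\sigma}_X(\{x\}) = \{(\sigma, u) \mid u \in \{x\}\} = \{(\sigma, x)\}$, exhibiting $\{(\sigma,x)\}$ in the image as required. Since every element of $\Sigma \times X$ arises this way, the family of liftings $(\detcertlift{\sigma})_{\sigma \in \Sigma}$ is mutually surjective on singletons.
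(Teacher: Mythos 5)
Your proof is correct and follows essentially the same route as the paper's: the paper dismisses naturality and monotonicity as immediate and establishes mutual surjectivity on singletons by the identical computation $\detcertlift{\sigma}_X(\{x\}) = \{(\sigma,x)\}$. The only difference is that you spell out the naturality square explicitly where the paper calls it trivial, which is a harmless (and welcome) elaboration.
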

\minorproof{
\begin{proof}
That this is natural is trivial to check. Monotonicity is obvious from the definition.
For arbitrary $\{ (\sigma, x) \}$ we have:
\begin{equation}
\detcertlift{\sigma}_X(\{ x \}) = \{ (\sigma, x) \}
\end{equation}
giving the mutual surjectivity on singletons.
\end{proof}
}

Now we lift to distributions over eigenvalues.
\begin{lemma}
\label{lem:dreqlift}
For $p \in [0,1]$ and $r \in \mathbb{R}$ define predicate lifting $\dreqlift{p}{r} : 2 \Rightarrow \distfunc(\mathbb{R} \times (-))$
as the composite $\llbracket \thenlift{\detcertmod{r}}{\deqmod{p}} \rrbracket$.
This lifting is given explicitly by:
\begin{equation}
\dreqlift{p}{r}_X(U) := \{ d \mid \Sigma_{u \in U} d(r, u) = p \}
\end{equation}
These liftings are separating.
\end{lemma}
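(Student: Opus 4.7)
The plan is to handle the two assertions of the lemma in sequence. First I would verify the explicit formula by unfolding the semantics of the composite modality expression, then I would invoke Lemma \ref{lem:sepcompsing} to obtain the separation property. Since no genuinely new content is needed beyond the earlier building blocks, essentially all of the work is in matching conventions.

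For the explicit formula, I apply the $\lhd$-clause of Definition \ref{def:modsem}, recalling that the tensor on $[\refcset,\refcset]$ is functor composition, so the composite functor here is $\distfunc \circ (\mathbb{R} \times (-))$. Unfolding the definition pointwise gives
\begin{equation*}
\llbracket \thenlift{\detcertmod{r}}{\deqmod{p}} \rrbracket_X(U) = \llbracket \deqmod{p} \rrbracket_{\mathbb{R} \times X}\bigl(\llbracket \detcertmod{r} \rrbracket_X(U)\bigr).
\end{equation*}
Substituting the formulas from Lemmas \ref{lem:detcertlift} and \ref{lem:deqlift}, the inner application produces the subset $\{(r, u) \mid u \in U\} \subseteq \mathbb{R} \times X$, and then the outer application asks for the distribution weight on this subset to equal $p$. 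Rewriting the resulting sum to range over $u \in U$ gives the stated explicit formula.

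For separation, the hypotheses of Lemma \ref{lem:sepcompsing} are met almost by construction. Lemma \ref{lem:deqlift} supplies the outer input: the family $(\deqlift{p})_{p \in [0,1]}$ on $\distfunc$ separates by singletons. Lemma \ref{lem:detcertlift} supplies the inner input: the family $(\detcertlift{r})_{r \in \mathbb{R}}$ on $\mathbb{R} \times (-)$ is mutually surjective on singletons. Applying Lemma \ref{lem:sepcompsing} to these two families yields that the composite family is separating for $\distfunc \circ (\mathbb{R} \times (-))$, and this composite family, indexed over all pairs $(p, r) \in [0,1] \times \mathbb{R}$, is precisely $(\dreqlift{p}{r})$.

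The only real care required is bookkeeping, namely ensuring that the orientation of $\lhd$ (outer modality on the left, inner modality on the right) is aligned with the inner/outer role assignment of Lemma \ref{lem:sepcompsing}, and that the composite functor direction matches: $\deqmod{p}$ plays the outer role on $\distfunc$, while $\detcertmod{r}$ plays the inner role on $\mathbb{R} \times (-)$. Once these conventions are lined up, both parts of the lemma follow with no further calculation.
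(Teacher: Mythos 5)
Your proposal is correct and follows essentially the same route as the paper's proof: the explicit formula is obtained by unfolding the $\lhd$-semantics of Definition~\ref{def:modsem}, and separation follows by applying Lemma~\ref{lem:sepcompsing} with Lemma~\ref{lem:detcertlift} supplying mutual surjectivity on singletons for the inner family and Lemma~\ref{lem:deqlift} supplying separation by singletons for the outer family. Your bookkeeping of the orientation of $\lhd$ and the roles of inner and outer functors matches the paper's conventions.
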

\minorproof{
\begin{proof}
The explicit form follows by expanding definitions.
For separation, by lemmas \ref{lem:detcertlift} and \ref{lem:deqlift} we can apply lemma \ref{lem:sepcompsing}.
\end{proof}
}

\begin{definition}
For finite dimensional Hilbert space $\mathcal{H}$ with dimension $n$,
let \san{n} denote the set of self adjoint operators.
Define the \define{distribution based quantum signature functor} \qdfuncn{n} as follows:
\begin{equation}
\qdfuncn{n} := \distfunc(\mathbb{R} \times (-) )^{\san{n}}
\end{equation}
There is an obvious \define{quantum coalgebra} for this signature, mapping pure states to distributions
over measurement outcomes and subsequent states.
\end{definition}

\begin{lemma}
\label{lem:dacc}
For a finite dimensional Hilbert space with dimension $n$,
the functor \qdfuncn{n} is accessible.
\end{lemma}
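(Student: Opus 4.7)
\textbf{Proof plan for Lemma \ref{lem:dacc}.}
The plan is to decompose $\qdfuncn{n} = \distfunc(\mathbb{R} \times (-))^{\san{n}}$ into three pieces and apply standard closure properties of accessible functors on $\mathbf{Set}$. Concretely I would write $\qdfuncn{n}$ as the composite
\begin{equation}
\qdfuncn{n} \;=\; (-)^{\san{n}} \;\circ\; \distfunc \;\circ\; (\mathbb{R} \times (-)),
\end{equation}
and argue that each factor is $\lambda$-accessible for a suitable regular cardinal $\lambda$, after which closure of accessible functors under composition finishes the job.

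First I would handle the two ``small'' factors. For the product factor $\mathbb{R} \times (-)$, I would note that in $\mathbf{Set}$ the functor $A \times (-)$ preserves all filtered colimits for any set $A$, because filtered colimits commute with finite limits; hence $\mathbb{R} \times (-)$ is $\omega$-accessible. By Lemma~\ref{lem:distomegaacc}, the finite distribution functor $\distfunc$ is also $\omega$-accessible. Composition of $\omega$-accessible endofunctors on $\mathbf{Set}$ is $\omega$-accessible, so $\distfunc(\mathbb{R} \times (-))$ is $\omega$-accessible.

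The real work is in the exponential factor $(-)^{\san{n}}$. Here $\omega$-accessibility fails in general: raising to a power $A$ only commutes with $\lambda$-filtered colimits when $\lambda$ is a regular cardinal strictly greater than $|A|$, since an element of $(\operatorname{colim}_i X_i)^A$ is an $A$-indexed family and such a family factors through some stage of the colimit only if the colimit is $|A|^+$-filtered. Thus I would pick any regular cardinal $\lambda > |\san{n}|$ (for instance $\lambda = (2^{\aleph_0})^+$, since $\san{n} \cong \mathbb{R}^{n^2}$ as a real vector space and so has cardinality $2^{\aleph_0}$) and observe that $(-)^{\san{n}}$ is $\lambda$-accessible.

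Finally, the composite of a $\lambda$-accessible functor with an $\omega$-accessible functor is $\lambda$-accessible, because $\lambda$-filtered colimits are a fortiori filtered. Applying this to $(-)^{\san{n}} \circ \distfunc(\mathbb{R} \times (-))$ gives that $\qdfuncn{n}$ is $\lambda$-accessible, and in particular accessible, as required. The only subtle point is the exponential bound, which I would highlight explicitly; everything else is a routine appeal to known closure properties.
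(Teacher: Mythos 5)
Your proof is correct and takes essentially the same approach as the paper: the same decomposition $\qdfuncn{n} = (-)^{\san{n}} \circ \distfunc \circ (\mathbb{R} \times (-))$, accessibility of each polynomial factor (the paper simply cites that polynomial functors with exponents of cardinality less than $\lambda$ are $\lambda$-accessible, where you spell out the explicit cardinal bound for $\san{n}$), Lemma~\ref{lem:distomegaacc} for $\distfunc$, and closure of accessible functors under composition. The care you take over the exponential bound is a useful elaboration of a point the paper leaves implicit, but it is not a different argument.
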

\minorproof{
\begin{proof}
We can describe our functor as the composite:
\begin{equation}
\qdfuncn{n} = (-)^{\san{n}} \circ \distfunc \circ (\mathbb{R} \times (-))
\end{equation}
Polynomial functors with exponents of cardinality less than $\lambda$ are $\lambda$-accessible.
As accessible functors are closed under composition, the claim then follows from lemma \ref{lem:distomegaacc}.
\end{proof}
}
Now we can lift to distributions for each self adjoint operator (physical quantity), giving a set of
liftings for our quantum signature functor $\qdfuncn{n}$:
\begin{lemma}
\label{lem:qdeqlift}
For finite dimensional Hilbert space $\mathcal{H}$ with dimension $n$,
for $p \in [0,1]$, $r \in \mathbb{R}$ and $\hat{A} \in \san{n}$ define unary predicate 
lifting $\qdeqlift{p}{r}{\hat{A}} : 2 \Rightarrow 2 \circ \qdfuncn{n}$ as follows:
\begin{equation}
\qdeqlift{p}{r}{\hat{A}} = \llbracket \dreqmod{p}{r}^{\eval{\hat{A}}} \rrbracket
\end{equation}
Where $\eval{\hat{A}}$ is as defined in example \ref{ex:kripke}.
These liftings are given explicitly by:
\begin{equation}
\qdeqlift{p}{r}{\hat{A}}_X(U) := \{ f \mid \Sigma_{u \in U } f(\hat{A})(r, u) = p \}
\end{equation}
and are separating.
\end{lemma}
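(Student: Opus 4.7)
The plan is to split the lemma into its two assertions: deriving the explicit form of the lifting, and establishing that the resulting family is separating. Both parts should reduce mechanically to facts already proved earlier in the excerpt.

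For the explicit form, I would simply unfold the definition of the adaptation modality from Definition \ref{def:modsem}. Applied to $\dreqmod{p}{r}^{\eval{\hat{A}}}$, this gives $\llbracket \dreqmod{p}{r}^{\eval{\hat{A}}} \rrbracket = (2 * \llbracket \eval{\hat{A}} \rrbracket) \circ \llbracket \dreqmod{p}{r} \rrbracket$. Evaluating at a set $X$ and subset $U \subseteq X$ amounts to taking the preimage of $\llbracket \dreqmod{p}{r} \rrbracket_X(U)$ along $\eval{\hat{A}}_X : \distfunc(\mathbb{R} \times X)^{\san{n}} \to \distfunc(\mathbb{R} \times X)$, which sends $f$ to $f(\hat{A})$. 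Combining this with the explicit description $\dreqlift{p}{r}_X(U) = \{ d \mid \Sigma_{u \in U} d(r,u) = p \}$ from Lemma \ref{lem:dreqlift} immediately yields $\{ f \mid \Sigma_{u \in U} f(\hat{A})(r, u) = p \}$.

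For the separating claim, I would invoke Lemma \ref{lem:evalsep} directly. That lemma is tailored to exactly this situation: given a separating family of predicate liftings on an endofunctor and an exponential by some set $A$, pre-composing with each evaluation natural transformation $\eval{a}$ yields a separating family on the exponentiated functor. Here the base functor is $\distfunc(\mathbb{R} \times (-))$, the exponent is $\san{n}$, and the required separating family $(\llbracket \dreqmod{p}{r} \rrbracket)_{p \in [0,1], r \in \mathbb{R}}$ is provided by Lemma \ref{lem:dreqlift}. So the liftings $(\llbracket \dreqmod{p}{r}^{\eval{\hat{A}}} \rrbracket)_{p, r, \hat{A}}$ form a separating set on $\llbracket \qdfuncn{n} \rrbracket$, which is precisely the claim.

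The only step requiring care is checking that the hypotheses of Lemma \ref{lem:evalsep} are met, in particular that there is an object in $\fibsigcat{\Phi}$ whose image is $\distfunc(\mathbb{R} \times (-))^{\san{n}}$ together with morphisms realising each $\eval{\hat{A}}$; this is built in when we identify $\fibsigcat{\Phi}$ with a monoidal subcategory of $[\refcset,\refcset]$ as per the earlier remark. I do not anticipate a genuine obstacle here since both the explicit form and the separating property are essentially immediate applications of the general compositional results already assembled in Section \ref{sec:semmod}; the proof is an instance of the modular construction strategy that the paper is illustrating.
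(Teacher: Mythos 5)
Your proposal matches the paper's own argument: the explicit form is obtained by expanding the semantics of the adaptation modality (a preimage along $\eval{\hat{A}}_X$), and separation follows by applying Lemma \ref{lem:evalsep} to the separating family from Lemma \ref{lem:dreqlift}. No issues.
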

\minorproof{
\begin{proof}
The explicit form is given by expanding definitions.
For separation, by lemma \ref{lem:dreqlift} the claim then follows from lemma \ref{lem:evalsep}.
\end{proof}
}

\begin{theorem}
For finite dimensional Hilbert space $\mathcal{H}$ with dimension $n$, any coalgebraic logic
with at least modalities with semantics given by the predicate liftings in lemma \ref{lem:qdeqlift}
is expressive if we allow conjunctions of sufficient cardinality.
\end{theorem}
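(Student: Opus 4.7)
The plan is to reduce this to a standard expressivity result for coalgebraic logics over accessible functors with a separating set of predicate liftings, as established in \citep{Schroder2008}. The two hypotheses of that theorem have essentially already been supplied by the preceding lemmas, so the proof should be short.

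First I would recall the standard fact that if $T : \refcset \rightarrow \refcset$ is $\lambda$-accessible and $\Lambda$ is a separating set of predicate liftings for $T$, then the coalgebraic logic built from $\Lambda$ is expressive with respect to behavioural equivalence, provided conjunctions of cardinality up to $\lambda$ are allowed. Then I would invoke lemma \ref{lem:dacc}, which states that $\qdfuncn{n}$ is accessible, so there is some cardinal $\lambda$ witnessing $\lambda$-accessibility; this fixes the required bound on conjunction cardinality in the statement.

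Next I would invoke lemma \ref{lem:qdeqlift}, which tells us that the family $(\qdeqlift{p}{r}{\hat{A}})_{p \in [0,1], r \in \mathbb{R}, \hat{A} \in \san{n}}$ is a separating set of predicate liftings for $\qdfuncn{n}$. Since any coalgebraic logic that includes at least these modalities contains a separating set of liftings, a fortiori its induced logical equivalence on $\qdfuncn{n}$-coalgebras is no coarser than the one induced by exactly this separating set. Combining this with accessibility, the standard expressivity theorem immediately yields the desired conclusion.

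The main subtlety, and likely the only nontrivial bookkeeping step, is matching the accessibility cardinal from lemma \ref{lem:dacc} to the ``conjunctions of sufficient cardinality'' clause in the statement: one needs $\kappa \geq \lambda$ where $\lambda$ is the accessibility degree of $\qdfuncn{n}$, which in turn is determined by $\distfunc$ (which is $\omega$-accessible by lemma \ref{lem:distomegaacc}) composed with the polynomial parts $(-)^{\san{n}}$ and $\mathbb{R} \times (-)$. Adding more modalities beyond the separating set preserves expressivity, so no further argument is needed for the ``at least'' clause.
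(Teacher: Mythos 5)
Your proposal is correct and follows exactly the paper's own argument: combine the accessibility of $\qdfuncn{n}$ (lemma \ref{lem:dacc}) with the separating set of liftings (lemma \ref{lem:qdeqlift}) and apply the standard expressivity theorem of \citep{Schroder2008}. The extra remarks on matching the accessibility cardinal to the conjunction bound are a harmless elaboration of the same reduction.
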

\begin{proof}
By lemma \ref{lem:dacc} and proposition \ref{lem:qdeqlift} the claim follows immediately by applying theorem
14 of \citep{Schroder2008}.
\end{proof}

Although the unary predicate liftings based on equalities given in lemma \ref{lem:qdeqlift} are
very straightforward and separating, they are not monotone. It is easy to follow similar
steps to those above to construct a monotone set of modalities, based on lower bounds on the required
probabilities rather than equalities. 
This can be done for example by taking conjunctions of equality based modalities above the required threshold.
This gives an expressive logic using monotone modalities 
with semantics similar to those of
probabilistic modal logics \citep{LarsenSkou1991, HeifetzMongin2001}. For reasons of space, this
direction is not pursued further here as the equality based predicate liftings are sufficient for the quantum
protocols we will address.

In reality, although we have good expressivity results for the liftings above, they are not
particularly natural for the needs of describing quantum protocols. To aid reasoning about these
protocols, we would like our modalities to better match the actions that are performed during
their implementation. We now introduce some additional more ``practical'' modalities.

\begin{definition}
\label{def:projcert}
By noting that the natural transformations $\top : 1 \Rightarrow 2$ and $\neg : 2 \Rightarrow 2$ 
are predicate lifting for the identity functor, we can define $0$-ary modality:
\begin{equation}
\hat{P} := \thenlift{\top}{\qdeqmod{1}{1}{\hat{P}}}
\end{equation}
Intuitively, in the quantum model, this describes ``a projective measurement $\hat{P}$ is certain to have a positive outcome''.
We can also define unary modality:
\begin{equation}
\qdcertmod{r}{\hat{A}} := \thenlift{\neg}{\qdeqmod{0}{r}{\hat{A}}}
\end{equation}
with the reading ``it is certain that after getting measurement outcome $r$ when measuring physical quantity $\hat{A}$, $\varphi$ will hold''.
\end{definition}

\begin{definition}
Using similar tools to those above, we can combine our unary modalities to provide a possibilistic polyadic modality, 
describing how subsequent states relate to possible measurement outcomes:
\begin{align}
\hat{A}(r_1 \mapsto (-),...,r_n \mapsto (-)) 
\end{align}
Informally this has semantics ``after measuring $\hat{A}$, if outcome $r_i$ occurs then the $i^{th}$ postcondition will hold.''
\end{definition}

\subsection{Basic Quantum Operations}
We first consider how some of the features of our fibred coalgebraic logic can be applied
to describe notions commonly considered when analyzing quantum systems and protocols.

\begin{example}[Unitary Evolution]
\label{ex:unitary}
For an arbitrary Hilbert space $\mathcal{H}$ we consider the quantum signature functor. A unitary
$\hat{U}$ on $\mathcal{H}$ induces a function $\hat{P} \mapsto \hat{U} \hat{P} \hat{U}^{\dagger}$
giving a natural transformations $\qdfuncn{n} \Rightarrow \qdfuncn{n}$ by precomposition.
These give adaptation modalities in our fibred coalgebraic logic, which in the case of
the quantum coalgebra encode unitary (Heisenberg type) evolution of the system. In this approach
the unitary evolution is encoded \emph{uniformly} across each coalgebra without extending the signature functor. 
We will write $\operatorname{\hat{U}} \varphi$ for ``after applying unitary transformation $\hat{U}$, $\varphi$ holds''.
\end{example}

\begin{example}[Restriction to Subsystems]
\label{ex:sub}
We consider a 2 qubit quantum system, with corresponding Hilbert space $\mathcal{H}_2 \otimes \mathcal{H}_2$.
We then fix a basis and define a linear map $\ket{ij} \mapsto \ket{i}$,
and then define natural transformations $\Alice : \qdfuncn{4} \Rightarrow \qdfuncn{2}$ by precomposition with the inverse image
of this linear map. This natural transformation induces adaptation modalities in our logic such that we can read $\Alice \varphi$
as ``if we restrict our attention to Alice's qubit, $\varphi$ holds.'' Note that we have not needed to explicitly 
introduce mixed states to handle restriction to subsystems as this is encoded in the measurements selected by
the \Alice natural transformation.
\end{example}

\begin{example}[Local Measurements]
If we consider a single qubit system, 
the $\llbracket \hat{P} \rrbracket$ predicate lifting given in definition \ref{def:projcert} 
describes certainty of projective measurement $\hat{P}$. The natural transformation \Alice
defined in example \ref{ex:sub} then gives modality $\hat{P}^{\Alice}$ giving certainty
of measurement $\hat{P}$ locally on Alice's qubit in a 2 qubit composite system.
\end{example}

\subsection{Quantum Teleportation}
\begin{definition}
We will write $\bellproj{i}$ for the projection operator corresponding to the $i^{th}$ Bell state.
\end{definition}
We consider the standard example of the quantum teleportation protocol \citep{BennettBrassardCrepeauJozsaPeresWooters1993}.
This is a 3 qubit protocol that can be informally described as follows:
\begin{quote}
Initially Alice has a qubit in (arbitrary) state $\varphi$ and she also shares half of a two qubit pair in the Bell state (the channel) with Bob. After a Bell basis measurement on both of Alice's qubits, if Bob applies a suitable correcting unitary, dependent on the outcome of the measurement, he can be certain his qubit is now in state $\varphi$.
\end{quote}
We can formalize this in our logic as the following formula:
\begin{align}
\hat{P}_{\varphi}^{\Alice} \wedge \bellproj{1}^{\Channel} \Rightarrow \hat{A}_{\Bell}^{\Both}(r_1 &\mapsto \Bob \adapt{\hat{U}_1} \hat{P}_{\varphi},\\
 r_2 &\mapsto \Bob \adapt{\hat{U}_2} \hat{P}_{\varphi},\\
 r_3 &\mapsto \Bob \adapt{\hat{U}_3} \hat{P}_{\varphi},\\
 r_4 &\mapsto \Bob \adapt{\hat{U}_4} \hat{P}_{\varphi})
\end{align}
As our modality is built from a conjunction of smaller modalities, we can adopt a more ``post selection'' style perspective and 
decompose our teleportation protocol into various possible measurement outcomes. 
Here we consider formulae capturing each of the $i \in \{1..4\}$ measurement outcomes separately:
\begin{equation}
\hat{P}_{\varphi}^{\Alice} \wedge \bellproj{1}^{\Channel}  \Rightarrow \qdcertmod{r_i}{\hat{A}_{\Bell}}^{\Both}(\Bob \adapt{\hat{U}_i} \hat{P}_\varphi)
\end{equation}

\subsection{Entanglement Swapping}
\begin{definition}
To simplify notation for multi-qubit systems we will now write $\bits{i,j}$ for the restrict to bits $i$ and $j$, rather than define a proliferation of named subsystems such as \Alice, \Channel etc. as in the previous protocol and examples.
\end{definition}

We now consider the 4 qubit entanglement swapping protocol \citep{ZukowskiZeilingerHorneEkert1993},
informally this protocol can be summarized as:
\begin{quote}
Initially qubits 1 and 2, and qubits 3 and 4 are in the Bell state. After a measurement on qubits 2 and 3 in the Bell basis and applying suitable corrective unitaries, dependent on the measurement outcome, we can be certain to leave qubits 1 and 4 and qubits 2 and 3 in the Bell state.
\end{quote}
This can be encoded in our modal logic for the $i \in {1..4}$ measurement outcomes as formulae of the form:
\begin{equation}
\bellproj{1}^{\bits{1,2}} \wedge \bellproj{1}^{\bits{3,4}} \Rightarrow \qdcertmod{r_i}{\bellsadj}^{\bits{2,3}}(\bits{1,4} \adapt{\hat{U}_i} \bellproj{1} \wedge \bits{2,3} \adapt{\hat{U}_i'} \bellproj{1})\\
\end{equation}

\section{Conclusions and Future Work}
We have presented a fibred coalgebraic logic and shown that it respects behavioural equivalence.
A distribution based signature functor for modelling finite dimensional quantum systems was introduced and
the calculational aspects of our logic were exploited to construct suitable modalities for reasoning
about quantum protocols. 
It was shown that expressivity of the logic could be lifted via the composition
operations from modalities for simpler and well understood signature functors.
The fibred aspects of our logic were exploited to capture key components of quantum computation, including
a uniform description of unitary evolution, restriction to local subsystems and encoding of local measurements 
on  composite systems.

The current work primarily concerns semantics. Proof theoretic aspects, particularly their
suitability for analysis of quantum protocols, will be
pursued in later work.  The logic presented here seems to potentially be a special case
of a general construction that could be applied to a suitable class of institutions \citep{GoguenBurstall1992}, 
this should be investigated further. 
Connections to the existing automated tools in coalgebraic logic,
and their application to analyzing quantum protocols should also be pursued.

\subsubsection*{Acknowledgements}
I would like to thank Andreas D\"oring and Samson Abramsky for their feedback and suggestions.
I would also like to thank the anonymous referees for their valuable comments and detailed recommendations.

\bibliographystyle{eptcs}

\begin{thebibliography}{1}
\providecommand{\bibitemdeclare}[2]{}
\providecommand{\surnamestart}{}
\providecommand{\surnameend}{}
\providecommand{\urlprefix}{Available at }
\providecommand{\url}[1]{\texttt{#1}}
\providecommand{\href}[2]{\texttt{#2}}
\providecommand{\urlalt}[2]{\href{#1}{#2}}
\providecommand{\doi}[1]{doi:\urlalt{http://dx.doi.org/#1}{#1}}
\providecommand{\bibinfo}[2]{#2}

\bibitemdeclare{misc}{Abramsky2010}
\bibitem{Abramsky2010}
\bibinfo{author}{S.~\surnamestart Abramsky\surnameend} (\bibinfo{year}{2010}):
  \emph{\bibinfo{title}{Coalgebras, Chu spaces, and representations of physical systems}}.
\newblock \bibinfo{note}{Logic in Computer Science (LICS 2010)}, pp.
  \bibinfo{pages}{411--420}, \doi{10.1007/s10992-013-9276-4}.

\bibitemdeclare{article}{CirsteaPattinson2007}
\bibitem{CirsteaPattinson2007}
\bibinfo{author}{C.~\surnamestart C\^{\i}rstea\surnameend} \&
  \bibinfo{author}{D.~\surnamestart Pattinson\surnameend}
  (\bibinfo{year}{2007}): \emph{\bibinfo{title}{Modular construction of complete coalgebraic logics}}.
\newblock {\sl \bibinfo{journal}{Theor. Comput. Sci.}} \bibinfo{volume}{338}, pp.
  \bibinfo{pages}{83--108}, \doi{10.1016/j.tcs.2007.06.002}.

\bibitemdeclare{article}{GoguenBurstall1992}
\bibitem{GoguenBurstall1992}
\bibinfo{author}{J.~A.~\surnamestart Goguen\surnameend} \&
  \bibinfo{author}{R.~M.~\surnamestart Burstall\surnameend}
  (\bibinfo{year}{1992}): \emph{\bibinfo{title}{Institutions: Abstract model theory for specification and
  programming}}.
\newblock {\sl \bibinfo{journal}{J. ACM}} \bibinfo{volume}{39}, pp.
  \bibinfo{pages}{95--146}, \doi{10.1145/147508.147524}.

\bibitemdeclare{article}{BennettBrassardCrepeauJozsaPeresWooters1993}
\bibitem{BennettBrassardCrepeauJozsaPeresWooters1993}
\bibinfo{author}{C.~H.~\surnamestart Bennett\surnameend},
\bibinfo{author}{J.~A.~\surnamestart Brassard\surnameend},
\bibinfo{author}{J.~A.~\surnamestart Cr\'{e}peau\surnameend},
\bibinfo{author}{J.~A.~\surnamestart Jozsa\surnameend},
\bibinfo{author}{J.~A.~\surnamestart Peres\surnameend} \& \bibinfo{author}{R.~M.~\surnamestart Wootters\surnameend}
  (\bibinfo{year}{1993}): \emph{\bibinfo{title}{Teleporting an unknown state via dual classical and
  {E}instein-{P}odolsky-{R}osen channels}}.
\newblock {\sl \bibinfo{journal}{Phys. Rev. Lett.}} \bibinfo{volume}{70}, pp.
  \bibinfo{pages}{1895--1899}, \doi{10.1103/physrevlett.70.1895}.

\bibitemdeclare{article}{HeifetzMongin2001}
\bibitem{HeifetzMongin2001}
\bibinfo{author}{A.~\surnamestart Heifetz\surnameend} \&
  \bibinfo{author}{P.~\surnamestart Mongin\surnameend}
  (\bibinfo{year}{2001}): \emph{\bibinfo{title}{Probability logic for type spaces}}.
\newblock {\sl \bibinfo{journal}{Games and Economic Behavior}} \bibinfo{volume}{35}, pp.
  \bibinfo{pages}{31--53}, \doi{10.1006/game.1999.0788}.

\bibitemdeclare{misc}{KurzPattinson2000b}
\bibitem{KurzPattinson2000b}
\bibinfo{author}{A.~\surnamestart Kurz\surnameend} \& \bibinfo{author}{D.~\surnamestart Pattinson\surnameend} (\bibinfo{year}{2000}):
  \emph{\bibinfo{title}{Coalgebras and modal logic for parameterized endofunctors}}.
\newblock \bibinfo{note}{Technical Report SEN-R0040, CWI, 2000}.

\bibitemdeclare{article}{KurzPattinson2000}
\bibitem{KurzPattinson2000}
\bibinfo{author}{A.~\surnamestart Kurz\surnameend} \&
  \bibinfo{author}{D.~\surnamestart Pattinson\surnameend}
  (\bibinfo{year}{2000}): \emph{\bibinfo{title}{Notes on coalgebras, cofibrations and concurrency}}.
\newblock {\sl \bibinfo{journal}{Electr. Notes Theor. Comput. Sci.}} \bibinfo{volume}{33}, pp.
  \bibinfo{pages}{196--229}, \doi{10.1016/s1571-0661(05)80349-4}.

\bibitemdeclare{article}{LarsenSkou1991}
\bibitem{LarsenSkou1991}
\bibinfo{author}{K.~G.~\surnamestart Larsen\surnameend} \&
  \bibinfo{author}{A.~\surnamestart Skou\surnameend}
  (\bibinfo{year}{1991}): \emph{\bibinfo{title}{Bisimulation through probabilistic testing}}.
\newblock {\sl \bibinfo{journal}{Inf. Comput.}} \bibinfo{volume}{94}, pp.
  \bibinfo{pages}{1--28}, \doi{10.1016/0890-5401(91)90030-6}.

\bibitemdeclare{misc}{Marsden2013}
\bibitem{Marsden2013}
\bibinfo{author}{D.~\surnamestart Marsden\surnameend} (\bibinfo{year}{2013}):
  \emph{\bibinfo{title}{Coalgebras with symmetries and modelling quantum systems}}.
\newblock \bibinfo{note}{In CALCO 2013)}, pp.
  \bibinfo{pages}{205--219}, \doi{10.1007/978-3-642-40206-7-16}.

\bibitemdeclare{misc}{Pattinson2001}
\bibitem{Pattinson2001}
\bibinfo{author}{D.~\surnamestart Pattinson\surnameend} (\bibinfo{year}{2001}):
  \emph{\bibinfo{title}{Expressivity Results in the Modal Logic of Coalgebras}}.
\newblock \bibinfo{note}{PhD thesis, Universit{\"{a}}t M{\"{u}}nchen)}.

\bibitemdeclare{misc}{Pattinson2002}
\bibitem{Pattinson2002}
\bibinfo{author}{D.~\surnamestart Pattinson\surnameend} (\bibinfo{year}{2002}):
  \emph{\bibinfo{title}{Translating logics for coalgebras}}.
\newblock \bibinfo{note}{WADT 2002}, pp.
  \bibinfo{pages}{393--408}, \doi{10.1007/978-3-540-40020-2-23}.

\bibitemdeclare{article}{Pattinson2003}
\bibitem{Pattinson2003}
\bibinfo{author}{D.~\surnamestart Pattinson\surnameend}
  (\bibinfo{year}{2003}): \emph{\bibinfo{title}{Coalgebraic modal logic, soundness, completeness and decidability of
  local consequence}}.
\newblock {\sl \bibinfo{journal}{Theor. Comp. Sci.}} \bibinfo{volume}{309}, pp.
  \bibinfo{pages}{177--193}, \doi{10.1016/s0304-3975(03)00201-9}.

\bibitemdeclare{misc}{Pattinson2008}
\bibitem{Pattinson2008}
\bibinfo{author}{D.~\surnamestart Pattinson\surnameend} (\bibinfo{year}{2008}):
  \emph{\bibinfo{title}{Coalgebraic logics and application}}.
\newblock \bibinfo{note}{Course notes: IJCAR 2008 Tutorial}.

\bibitemdeclare{article}{Schroder2008}
\bibitem{Schroder2008}
\bibinfo{author}{L.~\surnamestart Schr{\"o}der\surnameend}
  (\bibinfo{year}{2008}): \emph{\bibinfo{title}{Expressivity of coalgebraic modal logic: The limits and beyond}}.
\newblock {\sl \bibinfo{journal}{Theor. Comput. Sci.}} \bibinfo{volume}{390}, pp.
  \bibinfo{pages}{230--247}, \doi{10.1016/j.tcs.2007.09.023}

\bibitemdeclare{article}{ZukowskiZeilingerHorneEkert1993}
\bibitem{ZukowskiZeilingerHorneEkert1993}
\bibinfo{author}{M.~\surnamestart \'{Z}ukowski\surnameend},
\bibinfo{author}{A.~\surnamestart Zeilinger\surnameend},
\bibinfo{author}{M.~A.~\surnamestart Horne\surnameend} \&
\bibinfo{author}{A.~K.~\surnamestart Ekert\surnameend}
  (\bibinfo{year}{1993}): \emph{\bibinfo{title}{"{E}vent ready detectors" {B}ell experiments via entanglement swapping}}.
\newblock {\sl \bibinfo{journal}{Phys. Rev. Lett.}} \bibinfo{volume}{71}, pp.
  \bibinfo{pages}{4287--4290}, \doi{10.1103/physrevlett.71.4287}.

\end{thebibliography}

\end{document}